\documentclass{article}

\usepackage[english]{babel}

\usepackage[letterpaper,top=2cm,bottom=2cm,left=3cm,right=3cm,marginparwidth=1.75cm]{geometry}

\usepackage{amsmath}
\usepackage{graphicx}
\usepackage{float}
\usepackage{tikz}
\usepackage{amssymb}
\usepackage{caption} 
\usepackage{subcaption} 
\usepackage[colorlinks=true, allcolors=blue]{hyperref}
\usepackage{amsthm} 
\usepackage{mathtools} 
\usepackage{enumitem}
\usepackage{indentfirst}
\newcommand{\R}{\mathbb R}
\newcommand{\mc}{\mathcal}

\newtheorem{theorem}{Theorem}[section]  
\newtheorem{definition}{Definition}[section] 
\newtheorem{lemma}{Lemma}[section] 
\newtheorem{proposition}[theorem]{Proposition}
\newtheorem{remark}{Remark}[section]

\title{A Minimal Dynamical Model for Incubation–Outbreak Transitions in Social Norm Diffusion}
\author{Run Wang, Leonardo Cianfanelli, Giacomo Como, Wenjun Mei}

\begin{document}
\maketitle

\begin{abstract}
In this paper, we introduce a minimal dynamical model for the diffusion of a new social norm, in which individuals transition among three states: non-supporters, silent supporters, and vocal advocates. Despite its simplicity and its close relation to the classical SI-type and SIS-type spreading dynamics, this model exhibits a nontrivial latent–outbreak dynamic pattern: an initial small adoption wave is followed by a long quiescent period and then an abrupt, endogenous explosion of support. Through a complete analytical characterization of equilibria, stability, convergence, and phase-transition conditions, we identify the nonlinear mechanism that generates this incubation phenomenon and derive estimates of the latent period. Our results reveal a dynamical route to sudden social change driven purely by internal interactions rather than external shocks.
\end{abstract}

\section{Introduction}
\subsection{Background}
Social changes refer to the alteration of mechanisms within the social structure, characterized by shifts in cultural symbols, behavioral norms, social organizations, or value systems. While such changes are often driven by collective behavior, their origins can be traced to individual-level mechanisms \cite{nowak1990private}. Social change on an individual level refers to a shift in a person's beliefs, values, or perspectives, and when a sufficient number of people embrace a new viewpoint, it leads to a social change.

Our work focuses on the diffusion of a new social norm among human groups. Social change often follows a distinct pattern driven by external factors (e.g., the occurrence of a social event) and social dynamics. Initially, the triggering event leads to a shift in perspectives among a subset of individuals. Within this group, a fraction actively engages in public discourse, advocating for the new viewpoint. Through sustained interaction, those initially resistant to the change gradually become persuaded by these vocal proponents. Concurrently, individuals who privately support the new perspective but previously remained silent begin participating in the discourse as its visibility and intensity increase. However, the willingness of vocal advocates to sustain their active participation tends to diminish over time.

\subsection{Literature Review}
\indent
The application of epidemiological frameworks to simulate the diffusion of ideas, innovations, and information is well-established in the social sciences. Research by Bettencourt et al. \cite{BETTENCOURT2006513} indicates that the population dynamics underlying the diffusion of ideas are qualitatively similar to those involved in the spread of infections, allowing for precise quantification using parameters such as contact rates and incubation periods. Traditional population-based growth models, most notably the Bass model \cite{916272ae-6b7a-3e62-b7e2-5747187dae7b}, provide a foundation for understanding the non-linear timing of initial adoption through the interplay of innovative and imitative behaviors. Beyond these macro-scale models, individual-based and spatially explicit frameworks by Ling Bian etc. \cite{doi:10.1068/b2833} emphasize how local and long-distance interactions among micro-level agents shape the rhythm of propagation. Such models have demonstrated significant cross-disciplinary utility, effectively capturing the transition outcomes in fields as diverse as computer virus suppression \cite{PIQUEIRA2009355}, social media influencer maximization \cite{MORE2019102}, and pathogen invasion strategies \cite{annurev:/content/journals/10.1146/annurev.phyto.45.062806.094357, doi:10.1177/0885728810387922}.

The phenomenon of an outbreak occurring after a prolonged period of dormancy is a signature of non-linear system dynamics. At the structural level, small-world effects can induce a critical transition from local oscillations to global synchronized outbreaks \cite{PhysRevLett.86.2909}, illustrating how network topology dictates spreading patterns. Mathematically, these abrupt shifts can be triggered by a Hopf bifurcation, where a system undergoes a qualitative transition from steady states to periodic outbreaks as specific parameters cross a threshold \cite{XIAO2001733}. Crucially, the presence of bistability in reinfection models allows for the existence of resurgent epidemics, providing a mechanistic explanation for how opinions can oscillate between dormant and active states \cite{8353159}. These dynamics are further complicated by behavioral feedback loops; game-theoretic models coupled with epidemiological frameworks suggest that individual risk perception and the cost of adherence lead to transitions between multiple, non-concurrently stable equilibria \cite{doi:10.1073/pnas.2311584120}. Furthermore, collective patterns of social diffusion are often shaped by the micro-level balance between individual inertia \cite{alos2016inertia} and trend-seeking behaviors \cite{ye2021collective}, which explains the lag effects often observed before a rapid burst of adoption. Sensitivity analysis remains a critical tool for identifying which of these parameters most significantly drive the system from a latent state toward a global cascade \cite{https://doi.org/10.1155/2013/721406}.

The mechanisms of social norm change are fundamentally driven by the non-linear relationship between micro-motives and macro-outcomes. Threshold models of collective behavior \cite{doi:10.1086/226707} specify that an individual's decision to adopt a new norm depends on the proportion of their peers who have already done so. This process is often hidden by "preference falsification" \cite{kuran1991now}, where individuals conceal their true intentions due to social pressure, causing immense latent pressure to accumulate until a minor trigger causes an explosive and unexpected revolution. Empirical evidence confirms the existence of tipping points, demonstrating that when a committed minority reaches a critical mass—typically around 25\%—it can spontaneously overturn an established consensus and establish a new social coordination equilibrium \cite{doi:10.1126/science.aas8827, doi:10.1073/pnas.1418838112}. This transition from private attitude to public opinion involves complex spatial self-organization \cite{nowak1990private} and kinetic exchange processes between internal beliefs and external expressions \cite{10.1098/rsta.2021.0169}. Historically, such rapid collapses of deeply rooted norms, like the abolition of footbinding \cite{brown2018economic}, highlight how economic and social correlates drive these swift transitions. Finally, the stagnation or inhibition of these processes can often be attributed to social network fatigue \cite{ravindran2014antecedents} or negative feedback from successful reforms \cite{jain_public_nodate}, illustrating the cyclical and fragile nature of social change.

\subsection{Statement of Contribution}

In this work, we provide a comprehensive analysis of a novel dynamic model that explicitly accounts for the interplay between activist fatigue and latent consensus accumulation. Specifically, our contributions are three-fold. First, we identify and characterize an 'incubation phenomenon'—a counter-intuitive dynamic where a decline in public discussion and active supporters is followed by a significant resurgence of attention and widespread acceptance of the new social norm. This provides a mechanistic explanation for how social systems 'store energy' during periods of apparent stagnation. Second, we provide a rigorous mathematical analysis of the proposed dynamical system. This includes the identification of equilibrium points, stability analysis under varying parameter regimes, and the examination of the system's asymptotic properties. Third, we establish the threshold conditions that govern the transition between different social states. We demonstrate how specific parameter configurations—such as the fatigue rate and the silence threshold—determine whether a system will undergo social change or fall into permanent stagnation. These results provide an alternative perspective on the unpredictable nature of social tipping points \cite{ravindran2014antecedents, kuran1991now}.

\subsection{Organization}
The rest of the paper is organized as follows. In Section \ref{sec:model}, we introduce the model setup, while Section \ref{sec:basic} presents the basic results regarding equilibrium points and their stability. In Section \ref{sec:H}, we provide an invariant of motion of the system and characterize the geometry of its level sets. Section~\ref{sec:beh} investigates the asymptotic and transient behavior of the system, providing a rigorous proof for the divergence of the incubation period and analyzing the conditions that dictate the monotonicity of the advocacy level. Finally, Section \ref{sec:conclusion} is the conclusion.

\section{Model Set-up and Notations}\label{sec:model}
We consider a large, well-mixed population where each individual is characterized by two distinct attributes: their internal belief and their public behavior regarding a newly introduced social norm. Let $x(t) \in [0, 1]$ denote the fraction of the population that has internally accepted the new norm at time $t$. Among these supporters, only a subset actively participates in public discourse. We denote by $y(t) \in [0, 1]$ the fraction of the population that publicly advocates for the norm. By definition, these variables must satisfy the physical constraint $y(t) \le x(t)$ for all $t \ge 0$, a property that will be proved to hold in the subsequent section to ensure the model's well-posedness.

The evolution of the social transition is governed by two coupled mechanisms: \emph{latent belief adoption} and \emph{behavioral expression and exhaustion}.

\paragraph{1) Latent Belief Adoption (SI-like Dynamics):} 
The expansion of the supporter base occurs as unconverted individuals encounter public advocates and are persuaded to accept the new norm. Drawing an analogy from classical epidemiological models, this process follows SI-like dynamics:
\begin{equation}
    \dot{x}(t) = \alpha \big(1 - x(t)\big) y(t)\,,
    \label{equ_x}
\end{equation}
where $\alpha > 0$ represents the \textit{conversion rate} or the persuasiveness of the advocates. This formulation implies that the growth rate of supporters is jointly determined by the pool of potential converts $(1-x)$ and the current intensity of public advocacy $(y)$. Notably, this transition is modeled as an irreversible process, reflecting the nature of deep-seated cognitive shifts or the internalization of new values.

\paragraph{2) Behavioral Expression and Exhaustion (SIS-like Dynamics):} 
While all $x$ individuals internally support the norm, their transition from silence to active advocacy, and their subsequent withdrawal due to exhaustion, is captured by an SIS-like dynamic:
\begin{equation}
    \dot{y}(t) = \beta \big(x(t) - y(t)\big) y(t) - \gamma y(t)\,,
    \label{equ_y}
\end{equation}
where $\beta > 0$ denotes the \textit{mobilization rate} and $\gamma > 0$ represents the \textit{decay rate}, characterizing the spontaneous loss of desire for public expression. Here, the term $(x-y)$ explicitly represents the pool of \emph{silent supporters}. Equation \eqref{equ_y} reflects the interplay between two opposing forces: the recruitment of silent supporters by active ones and the natural decay of advocacy due to individual fatigue.

\paragraph{3) Parameter Scaling and Simplification:} 
To facilitate the subsequent mathematical analysis, we observe that the system's qualitative behavior is invariant under a proportional scaling of all parameters. By rescaling the time variable $\tilde{t} = \alpha t$, we can set $\alpha = 1$ without loss of generality. This reduces the system to a minimal form
\begin{equation}
\label{sys_dynamics}
\begin{cases}
    \dot{x} = (1 - x)y\,, \\
    \dot{y} = \beta(x - y)y - \gamma y\,,
\end{cases}
\end{equation}
where the dynamics are now characterized by two fundamental dimensionless ratios: the mobilization-to-conversion ratio ($\beta/\alpha$) and the fatigue-to-conversion ratio ($\gamma/\alpha$). The diagram illustrating the transitions is reported in Figure %\ref{fig:block1}-
\ref{fig:block2}.

\begin{figure}[H]
    \centering
    \includegraphics[width=0.7\textwidth]{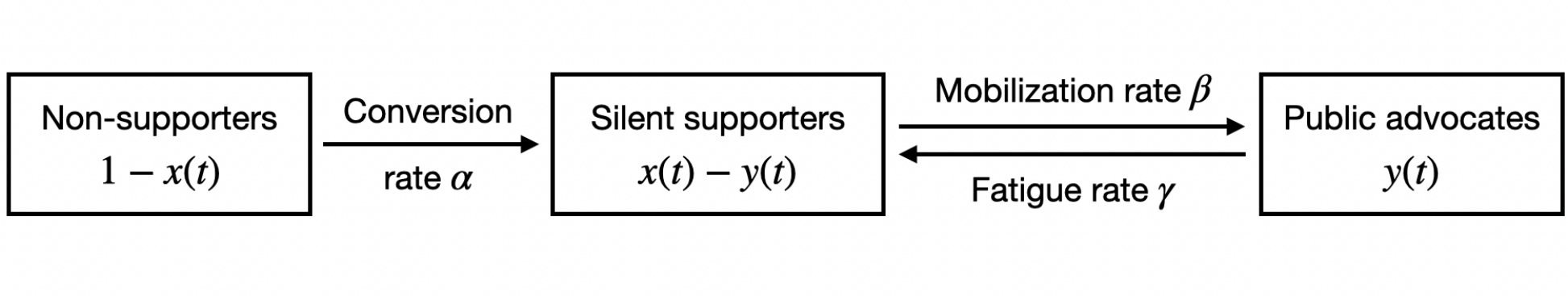}  
    \caption{A diagram illustrating the private-attitude conversion, the mobilization, and the fatigue mechanisms. \label{fig:block2}}
\end{figure}

\section{Basic Properties: Equilibrium Points and Their Stability}\label{sec:basic}

Before proceeding to the analysis of the system asymptotic behavior, it is essential to establish the mathematical well-posedness of the proposed model. Specifically, we must ensure that the state variables $(x, y)$, representing population fractions, remain within a physically meaningful domain for all $t \ge 0$. In the following proposition, we rigorously prove the positive invariance of the state space, demonstrating that any trajectory starting within the unit square remains confined therein, and further, that the structural constraint $y(t) \le x(t)$ is preserved under the system dynamics. Based on the meaning of x and y, we consider the following set $$A = \{(x, y)\in \mathbb{R}^2 : 0 \leq y \leq x \leq 1\}.$$

\begin{proposition}\label{proposition:basic}
Consider model \eqref{sys_dynamics}. Then: 
\begin{enumerate}
    \item[(i)] for every initial condition $(x(0), y(0))$ in $A$, there exists a unique solution $(x(t), y(t))$. The solution is $\mc C^\infty$ and belongs to $A$ for every $t$ in $\R_+$;
    \item[(ii)] $x(t)$ is monotonically non-decreasing.
\end{enumerate}
\end{proposition}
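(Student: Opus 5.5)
The plan is to combine local existence and uniqueness for a smooth vector field with a positive-invariance argument for the compact set $A$. The boundedness that comes with invariance then rules out finite-time blow-up and gives global existence.

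\textbf{Local existence, uniqueness, smoothness.} The right-hand side $f(x,y)=\big((1-x)y,\ \beta(x-y)y-\gamma y\big)$ is a polynomial in $(x,y)$, hence $\mc C^\infty$ and locally Lipschitz on $\R^2$. By the Picard--Lindel\"of theorem, for every initial condition there is a unique maximal solution on some interval $[0,T_{\max})$. Since $f$ is $\mc C^\infty$, this solution is also $\mc C^\infty$ (repeatedly differentiate $\dot z=f(z)$). It remains to show that the solution stays in $A$. Because $A$ is compact, this forces $T_{\max}=+\infty$: otherwise the solution would have to leave every compact set as $t\to T_{\max}$.

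\textbf{Invariance of $A$.} I would check each of the three boundary pieces of $A$ using invariant lines and comparison arguments, rather than a generic Nagumo condition. \emph{Line $y=0$:} it is invariant, since $\dot y=0$ there. Moreover $\dot y=y\,g(x,y)$ with $g=\beta(x-y)-\gamma$ continuous, so $y(t)=y(0)\exp\!\big(\int_0^t g(x(s),y(s))\,ds\big)$. Hence $y(t)\ge 0$ for all $t$, with $y(t)>0$ whenever $y(0)>0$. \emph{Line $x=1$:} similarly $\dot x=(1-x)y$ gives $1-x(t)=(1-x(0))\exp\!\big(-\int_0^t y(s)\,ds\big)$, so $x(t)\le 1$ for all $t$. \emph{Diagonal $y=x$:} set $w=x-y$ and compute
\[
\dot w=(1-x)y-\beta w y+\gamma y=y\big(1-x+\gamma\big)-\beta y\,w .
\]
This is a linear equation $\dot w=-\beta y\,w+h(t)$ with forcing $h(t)=y(1-x+\gamma)$, and $h\ge0$ as long as $0\le y$ and $x\le 1$. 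By variation of constants, $w(t)=e^{-\int_0^t\beta y}\,w(0)+\int_0^t e^{-\int_s^t\beta y}h(s)\,ds\ge0$ whenever $w(0)\ge0$. Finally, $x\ge y\ge0$ gives $x\ge0$.

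The constraints are coupled: $w\ge0$ uses $x\le1$ and $y\ge0$, and these are established independently by the exponential formulas above. So the order of the steps is $y\ge0$, then $x\le1$, then $w\ge0$, with no circularity.

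\textbf{Monotonicity (ii).} Once the solution is known to lie in $A$ for all $t$, we have $1-x(t)\ge0$ and $y(t)\ge0$, so $\dot x=(1-x)y\ge0$ and $x$ is non-decreasing. The only delicate point in the whole proof is the diagonal constraint $y\le x$, because the vector field there is not obviously inward-pointing in a form suited to a one-line argument. The variation-of-constants identity for $w$ resolves this, and it also covers the corner $(1,1)$.
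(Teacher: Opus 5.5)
Your proof is correct and follows essentially the same route as the paper: you show invariance of $y\ge 0$ and $x\le 1$ using the lines $y=0$ and $x=1$, get $y\le x$ from the derivative of $x-y$, then use compactness of $A$ for global existence and $\dot x=(1-x)y\ge 0$ for monotonicity. The one difference is that you replace the paper's sign-on-the-boundary arguments with explicit exponential and variation-of-constants formulas. This makes each invariance claim airtight even where the boundary conditions are only non-strict, and it fixes the order of the steps, whereas the paper deduces $x\ge 0$ from the monotonicity of $x$ before $x\le 1$ has been established.
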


\begin{proof}
Since the right-hand side of \eqref{sys_dynamics} is $\mc C^{\infty}$, existence, local uniqueness, and regularity of the solution are standard \cite[Chapter~I.3]{hale}. To prove that $A$ is positively invariant, note that $y = 0$ implies $\dot y = 0$, so that $y(t) \ge 0$ for every time $t \ge 0$ if $y(0) \ge 0$. Since $x(t)$ is non-decreasing in time, $x(t) \ge 0$ if $x(0) \ge 0$. Moreover, $\dot x = 0$ if $x = 1$, hence $x(t)$ cannot exceed 1. To prove that $y(t) \le x(t)$ for every $t$, define $z(t) = x(t) - y(t) $ and note that $\dot z = (1 - x - \beta(x - y) + \gamma)y$ is non-negative whenever $x = y \le 1$, hence $y$ can never exceed $x$. This proves that $A$ is positively invariant. Since $A$ is compact, uniqueness of the solution is ensured globally, proving item (i). Item (ii) follows from the fact that $A$ is positively invariant, hence $\dot x(t) = x(t) (1-y(t)) \ge 0$ for every $t \ge 0$.
\end{proof}

Having established that the trajectories of the system are confined within the physically meaningful domain $A$, we now identify the set of equilibrium points and characterize their stability.

%\begin{definition}
%    We define set $A^*$ as: $$A^*= \{(x, 0): x \in [0,1]\} \cup \Big\{\Big(1, \Big[1 - \frac{\gamma}{\beta}\Big]_+\Big)\Big\}.$$
%\end{definition}

\begin{proposition}\label{prop:equilibria}
\begin{enumerate}
\item[(i)] If $\beta \le \gamma$, the set of equilibrium points of \eqref{sys_dynamics} in $A$ is given by
\begin{equation}\label{eq:stable_below}
\{(x,0) \in A: x \in [0,1]\}\,,
\end{equation}
and all equilibria are stable but not asymptotically stable. 
\item[(ii)] If $\beta > \gamma$, the set of equilibrium points of \eqref{sys_dynamics} in $A$ is given by 
$$\{(x,0) \in A: x \in [0,1]\} \cup \{(1,y^*)\}\,, \quad y^* = 1-\frac{\gamma}{\beta}\,.$$
Moreover, the equilibrium points 
\begin{equation}\label{eq:stable}\Big\{(x,0) \in A: x \in \Big[0,\frac\gamma\beta\Big)\Big\}\end{equation}
are stable but not asymptotically stable, the equilibrium points
\begin{equation}\label{eq:unstable}\Big\{(x,0) \in A: x \in \Big[\frac\gamma\beta,1\Big]\Big\}\end{equation}
are unstable, and the equilibrium point
\begin{equation}\label{eq:as_stable}
(1,y^*)
\end{equation}
is asymptotically stable.
\end{enumerate}
\end{proposition}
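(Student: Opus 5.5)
The plan is to compute the equilibria directly, then settle stability of the non-isolated equilibria by a trajectory argument rather than by linearization, since the linearization along the line $y=0$ is degenerate.

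\textbf{Equilibria.} Setting $\dot x=(1-x)y=0$ forces $x=1$ or $y=0$. If $y=0$ then $\dot y=0$ automatically, so every $(x,0)$ with $x\in[0,1]$ is an equilibrium. If $y\neq 0$ and $x=1$, then $\dot y=0$ reduces to $\beta(1-y)=\gamma$, i.e.\ $y=1-\gamma/\beta$. This point lies in $A$ with $y>0$ iff $\beta>\gamma$, which gives the dichotomy (i)/(ii).

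\textbf{Stability of $(x_0,0)$ with $x_0<\gamma/\beta$ (including all $x_0\in[0,1]$ when $\beta\le\gamma$, apart from the boundary case $x_0=1=\gamma/\beta$).}
\begin{enumerate}
\item Use the invariant-type relation $dy/dx=\big(\beta(x-y)-\gamma\big)/(1-x)$ on trajectories. More simply, in a small neighbourhood where $x<\gamma/\beta-\delta$ we have $\dot y\le(\beta x-\gamma)y<-\beta\delta y$, so $y$ decays exponentially.
\item Then $x(t)-x(0)\le\int_0^\infty y\,dt\le y(0)/(\beta\delta)$, which is small. A continuity/bootstrap argument keeps the trajectory in the neighbourhood, giving Lyapunov stability.
\item Asymptotic stability fails because every neighbouring point $(x,0)$ is itself an equilibrium, so the equilibria are non-isolated.
\item For $\beta\le\gamma$ and $x_0=1$: on $A$ we have $\dot y\le(\beta-\gamma)y\le0$ and $x\le1$. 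Trajectories stay within the $\epsilon$-box since $y$ is nonincreasing and $x$ moves monotonically toward $1$. The same argument covers $x_0$ close to $1$ when $\beta=\gamma$.
\end{enumerate}

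\textbf{Instability of $(x_0,0)$ with $x_0\in[\gamma/\beta,1]$ when $\beta>\gamma$.}
\begin{enumerate}
\item Start at $(x_0,\epsilon)$ with $\epsilon>0$ small; such a point lies in $A$ when $x_0>0$.
\item While $y$ is small, $\dot y\ge(\beta(x_0-y)-\gamma)y$, and $x$ increases, which only helps.
\item For $x_0>\gamma/\beta$, $y$ grows until it reaches a fixed positive level bounded away from $0$ independently of $\epsilon$. Concretely, $y$ cannot remain below $\eta=(\beta x_0-\gamma)/(2\beta)$ forever, since below $\eta$ we have $\dot y\ge \tfrac{1}{2}(\beta x_0-\gamma)y$.
\item For $x_0=\gamma/\beta$: $\dot x>0$, so $x$ exceeds $\gamma/\beta$ and the previous step applies. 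Alternatively, use that trajectories converge to $(1,y^*)$, at distance $y^*$ from the line $y=0$.
\item Hence trajectories starting arbitrarily close leave a fixed neighbourhood.
\end{enumerate}
I expect this to be the main obstacle: the linearization has a zero eigenvalue at $x_0=\gamma/\beta$, and one must handle that borderline point carefully. I would argue directly: since $\dot x=(1-x)y>0$, $x(t)$ passes strictly above $\gamma/\beta$ in finite time, after which exponential growth of $y$ gives escape to distance at least a fixed positive amount.

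\textbf{Asymptotic stability of $(1,y^*)$.} The Jacobian at $(1,y^*)$ is
\[
J=\begin{pmatrix}-y^* & 0\\ \beta y^* & \beta(1-2y^*)-\gamma\end{pmatrix},
\]
with $\beta(1-2y^*)-\gamma=-\beta y^*$. It is lower triangular with eigenvalues $-y^*$ and $-\beta y^*$, both negative. Hence $(1,y^*)$ is locally exponentially stable by the linearization theorem, restricted to $A$.
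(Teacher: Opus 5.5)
Your proposal is correct. It matches the paper on three points: the equilibrium computation, the point $(1,0)$ when $\beta=\gamma$ (monotonicity $\dot x\ge 0$, $\dot y\le 0$ on $A$), and the point $(1,y^*)$ (the same triangular Jacobian with eigenvalues $\gamma/\beta-1$ and $\gamma-\beta$).

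Where you differ is the treatment of the line of equilibria $y=0$. For $x_0<\gamma/\beta$ the paper invokes the center manifold theorem: the center subspace $\{y=0\}$ consists of equilibria, the reduced flow is $\dot x=0$, and stability of the full system is inferred from that. You instead give an explicit $\varepsilon$--$\delta$ argument:
\begin{itemize}
\item exponential decay of $y$ in a box to the left of $\gamma/\beta$;
\item the drift bound $x(t)-x(0)\le y(0)/(\beta\delta)$;
\item a continuity bootstrap.
\end{itemize}
Your route is elementary, self-contained and gives explicit neighbourhoods. The paper's route is shorter but relies on the center-manifold reduction principle, i.e.\ stability of the reduced flow transferring to the full system. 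Likewise, for $x_0>\gamma/\beta$ the paper cites the positive eigenvalue $\beta x_0-\gamma$, while your escape estimate below the level $(\beta x_0-\gamma)/(2\beta)$ proves instability by hand. Both are valid.

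At the borderline point $(\gamma/\beta,0)$, which you rightly single out, your ``direct'' sketch is not enough on its own. After $x$ reaches some $x_1>\gamma/\beta$, the growth estimate only pushes $y$ above $(\beta x_1-\gamma)/(2\beta)$. Since $x_1-\gamma/\beta$ shrinks with the initial perturbation, no fixed escape distance follows. (The paper's own argument there only shows $\dot y>0$ for small $t$, so it is equally informal.)

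Your alternative, convergence to $(1,y^*)$ from the later asymptotic theorem, does settle it. It is not circular, because that theorem uses only the list of equilibria from this proposition. It applies because $(\gamma/\beta,\epsilon)\notin A_s$, since $g_0(\gamma/\beta)=0$.

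A self-contained fix also works. The set $A_+=\{0<y<x-\gamma/\beta\}$ is positively invariant, since on $y=x-\gamma/\beta$ one has $\dot y=0<\dot x$, and $\dot y>0$ inside it. Starting from $(\gamma/\beta+2\epsilon,\epsilon)\in A_+$ therefore gives $y(t)\ge\epsilon$ for all $t$. Hence $\frac{d}{dt}(1-x)\le-\epsilon(1-x)$ and $x(t)\to 1$. The trajectory thus leaves every neighbourhood of $(\gamma/\beta,0)$ of radius less than $1-\gamma/\beta$, however small $\epsilon$ is.
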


\begin{proof}
    $(x, y)$ is an equilibrium point if and only if $\dot{x} = 0$ and $\dot{y} = 0$, that is, if
    \begin{equation*}
        \begin{cases}
            \dot{x} = (1 - x)y = 0 \\
            \dot{y} = \beta(x - y)y - \gamma y = 0\,.
        \end{cases}
    \end{equation*}  
    The first equation implies that either $y = 0$ or $x = 1$. If $y = 0$, then the second equation is satisfied for every $x$, which means that all points $(x,0)$ are equilibria. If $x = 1$, then the second equation yields $\beta(1 - y)y - \gamma y = 0$, which admits one solution $y = 0$ if $\beta \le \gamma$ or two distinct solutions $y = 0$ and $y = 1 - \gamma/\beta$ if $\beta > \gamma$. This completely characterizes the set of equilibrium points.

We now analyze the stability of the equilibrium points. First notice that the Jacobian of the system is
$$J(x,y) = \begin{pmatrix}
	-y & 1-x \\
	\beta y & \beta(x-2y) - \gamma
\end{pmatrix}\,.$$
Consider an equilibrium point $(x^*,0)$ with $x^* < \gamma/\beta$. Hence, the Jacobian matrix
\begin{equation}\label{eq:J0}
J(x^*,0) = \begin{pmatrix}
    0 & 1 - x^* \\
    0 & \beta x^* - \gamma
\end{pmatrix}\,,
\end{equation}
admits eigenvalues $\lambda_1 = 0$ and $\lambda_2 = \beta x^* - \gamma < 0$. We establish the stability of the equilibrium points by using the center manifold theorem \cite[Section 3.2]{Guckenheimer1983}. The center subspace for the equilibrium $(x^*,0)$ is $\mathcal V_c = \{(x,0): x \in \mathbb R\}$.
On this manifold, the dynamics is $\dot{x}=0$, implying the stability of the equilibrium point. This proves that the equilibrium points \eqref{eq:stable} when $\beta > \gamma$ and the equilibrium points \eqref{eq:stable_below} when $\beta \le \gamma$ are stable, excluding equilibrium $(1,0)$ when $\beta = \gamma$. However, $(1,0)$ is stable when $\beta = \gamma$, since in such a case $\dot x \ge 0$ and $\dot y \le 0$ for every $(x,y)$ in $A$. Furthermore, since \eqref{eq:stable_below} and \eqref{eq:stable} form a continuum of equilibrium, they cannot be asymptotically stable equilibrium points. This completely characterizes the stability of the equilibrium points \eqref{eq:stable_below}-\eqref{eq:stable}.
The instability of equilibrium points $(x^*,0)$ with $x^* > \gamma /\beta$ when $\beta > \gamma$ follows from the fact that the Jacobian matrix \eqref{eq:J0} has a positive eigenvalue $\beta x^* - \gamma$.
To conclude the proof of instability of the equilibrium points \eqref{eq:unstable},
we are left with proving the instability of $(\gamma/\beta,0)$. To this end, take $\delta>0$ and choose initial condition
$$
x(0)=\frac\gamma\beta+\frac{\delta}{2}, \quad y(0)=\frac{\delta}{2}.
$$
Then
$$
x(0)-y(0)=\frac{\gamma}{\beta},
$$
and therefore
$$
\dot y(0)
=
\beta(x(0)-y(0))y(0)-\gamma y(0)
=0.
$$
Moreover,
$$
\dot x(0)
=
(1-x(0))y(0)
=
\left(1-\frac{\gamma}{\beta}-\frac{\delta}{2}\right)\frac{\delta}{2}>0
$$
for sufficiently small $\delta>0$. Hence, for sufficiently small $t>0$,
$$
x(t)-y(t)>\frac{\gamma}{\beta},
$$
which implies
$$
\dot y(t) =\beta(x(t)-y(t))-\gamma y(t)>0\,.
$$
Consequently, $y(t)$ increases away from zero, and trajectories starting arbitrarily close to $(\gamma/\beta,0)$ leave any sufficiently small neighborhood of the equilibrium point. Hence, $(\gamma/\beta,0)$ is unstable, and the characterization of the stability of equilibrium points \eqref{eq:unstable} is thus complete.
        
        Finally, to prove the asymptotic stability of the equilibrium point $(1,y^*)$ when $\beta > \gamma$, note that
        $$
        J(1,y^*) = \begin{pmatrix}
        	\gamma/\beta - 1 & 0 \\
        	\beta - \gamma & \gamma - \beta
        \end{pmatrix}\,,
        $$ 
        hence both eigenvalues are real and negative, and the asymptotic stability follows from the linearization theorem.
\end{proof}

The set of equilibrium points and their stability is illustrated in Figure \ref{fig:comparison}. The next section provides a non-trivial invariant of motion of the dynamics and characterizes its level sets.
\section{Invariant of Motion}\label{sec:H}

To further characterize the global phase portrait and the long-term evolution of the social transition, we seek to identify the underlying constraints that govern the system trajectories. We shall see that the system admits an invariant of motion, providing a powerful tool for the system analysis.

\begin{definition}
Given $\beta>0$, let $H_{\beta}(x, y): [0,1) \times [0,1] \to \R$ defined by
$$H_{\beta}(x,y) = 
    \begin{cases}
        \displaystyle\frac{y + \gamma - 1}{1 - x} - \ln(1 - x) & \beta = 1 \\[7pt]
        \Big(y + \displaystyle\frac{\gamma}{\beta} - 1\Big)(1 - x)^{- \beta} - \frac{\beta}{1 - \beta}(1 - x)^{1 - \beta} & \beta \neq 1\,.
    \end{cases}$$ 
\end{definition}

\begin{remark}\label{remark:H}
    Note that $H_{\beta}(1,y)$ is not well defined for every $\beta > 0$. Moreover, $\partial H_\beta(x,y) / \partial y > 0$ for every $\beta > 0$, $x$ in $[0,1)$, and $y$ in $[0,1]$.
\end{remark}

\begin{proposition}\label{prop:H}
    $H_{\beta}(x,y)$ is an invariant of motion of model \eqref{sys_dynamics}, i.e., $\dot H_\beta(x,y) = 0$.
\end{proposition}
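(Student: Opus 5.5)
The plan is to differentiate $H_\beta$ along trajectories with the chain rule and check that the result vanishes. For $x\in[0,1)$ we have
$$\dot H_\beta = \frac{\partial H_\beta}{\partial x}\,(1-x)y + \frac{\partial H_\beta}{\partial y}\,\bigl(\beta(x-y)y-\gamma y\bigr).$$
The key observation is that every term carries a common factor $y$. It therefore suffices to show the following identity for $x\in[0,1)$:
$$\frac{\partial H_\beta}{\partial x}(1-x) + \frac{\partial H_\beta}{\partial y}\bigl(\beta(x-y)-\gamma\bigr)=0.$$

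\emph{Case $\beta\neq 1$.} The $y$-derivative is $\partial_y H_\beta=(1-x)^{-\beta}$. For the $x$-derivative,
$$\partial_x H_\beta=\beta\Big(y+\frac{\gamma}{\beta}-1\Big)(1-x)^{-\beta-1}+\beta(1-x)^{-\beta}.$$
Multiplying by $(1-x)$ gives
$$(1-x)^{-\beta}\bigl(\beta y+\gamma-\beta+\beta(1-x)\bigr)=(1-x)^{-\beta}(\beta y+\gamma-\beta x).$$
This is exactly the negative of $(1-x)^{-\beta}(\beta x-\beta y-\gamma)$, so the bracketed identity holds. The term $-\frac{\beta}{1-\beta}(1-x)^{1-\beta}$ is what absorbs the leftover $\beta(1-x)^{-\beta}$, which is why that odd-looking term appears in the definition.

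\emph{Case $\beta=1$.} This is the same computation: $\partial_y H_1=(1-x)^{-1}$ and $\partial_x H_1=(y+\gamma-1)(1-x)^{-2}+(1-x)^{-1}$. Multiplying the latter by $(1-x)$ gives $(1-x)^{-1}(y+\gamma-x)$, which cancels against $(1-x)^{-1}(x-y-\gamma)$. Alternatively, one can note that $H_1$ is the limit of the $\beta\neq1$ formula after subtracting the constant $-\beta/(1-\beta)$, though a direct check is simpler.

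The only care needed is the domain. $H_\beta$ is defined only for $x<1$, and by Proposition~\ref{proposition:basic} $x$ is nondecreasing and stays at most $1$. If $x(0)<1$, then $x(t)<1$ for all $t$: the line $x=1$ consists of points where $\dot x=0$, and uniqueness prevents reaching it in finite time. Hence $H_\beta(x(t),y(t))$ is well defined and $C^1$ along the trajectory, and its derivative vanishes, as claimed. This domain remark is the main subtlety; the rest is routine algebra.
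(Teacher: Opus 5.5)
Your proposal is correct and takes the same approach as the paper. Both do a direct chain-rule computation of $\dot H_\beta$ in the cases $\beta=1$ and $\beta\neq 1$, and the terms cancel once the common factor $y(1-x)^{-\beta}$ (resp.\ $y/(1-x)$) is pulled out. Your domain remark that $x(0)<1$ forces $x(t)<1$ for all $t$ is a useful point the paper leaves implicit; it also follows directly from $1-x(t)=(1-x(0))\exp\bigl(-\int_0^t y(s)\,ds\bigr)$.
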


\begin{proof}
    We compute the time derivative of $H_{\beta}(x,y)$.
    If $\beta = 1$,
    \begin{equation*}            
        \begin{aligned}
            \dot{H}_{\beta}(x,y) &= \frac{\partial H_{\beta}}{\partial x}\dot{x} + \frac{\partial H_{\beta}}{\partial y}\dot{y} \\
            &= \bigg(\frac{y + \gamma - 1}{(1 - x)^2}  + \frac{1}{1 - x}\bigg)(1-x)y + \bigg(\frac{1}{1 - x} \bigg)\big((x-y)y-\gamma y\big) \\
            &= \bigg(\frac{y - x + \gamma}{(1 - x)^2} \bigg)(1-x)y + \bigg(\frac{1}{1 - x} \bigg)\big((x-y)y-\gamma y\big) \\
            &= \frac{y}{1 - x}\big((y - x + \gamma) + (x - y - \gamma)\big) \\[3pt]
            &= 0.
        \end{aligned}
    \end{equation*}
    If $\beta \neq 1$,
    \begin{equation*}            
        \begin{aligned}
            \dot{H}_{\beta}(x,y) &= \frac{\partial H_{\beta}}{\partial x}\dot{x} + \frac{\partial H_{\beta}}{\partial y}\dot{y} \\
            &= \bigg(\beta\Big(y+\displaystyle\frac{\gamma}{\beta}-1\Big)(1-x)^{-(\beta+1)}+\beta(1-x)^{-\beta}\bigg) (1-x)y \ + \\
            & + (1 - x)^{-\beta}\big(\beta(x-y)y-\gamma y\big) \\[3pt]
            &= (1-x)^{-\beta}y\big(\beta y+\gamma-\beta+\beta(1-x)+\beta(x-y)-\gamma\big)\\[3pt]
            &= 0.
        \end{aligned}
    \end{equation*}
    This concludes the proof.
\end{proof}

The next technical result characterizes the level sets of the invariant of motion above the threshold. Towards this goal,  for $\varepsilon \ge 0$, we let $g_\varepsilon: [0,1] \to \R$ 
%\textcolor{red}{be careful to the point $x=1$} 
defined by
    \begin{equation}\label{eq:g}
        g_{\varepsilon}(x) = \begin{cases}
            1 - \gamma + (1 - x)\Big(\ln{\displaystyle\frac{1-x}{1-\gamma}-1+\varepsilon}\Big) & \beta = 1\\[7pt]
            1 - \displaystyle\frac{\gamma}{\beta}+\frac{\beta}{1 - \beta}(1 - x)-C_{\varepsilon,\beta}(1-x)^{\beta} & \beta \neq 1\,,
        \end{cases}
    \end{equation}
with
\begin{equation}\label{eq:C}
C_{\varepsilon,\beta} = \frac{(1-\gamma/\beta)^{1-\beta}}{1-\beta}-\varepsilon
\end{equation}
and with convention $0 \cdot \log 0 = 0$.

%\textcolor{red}{I could actually modify the statement, because (i) is true and meaningful for every $\varepsilon$, in (ii) I might be interested in saying when $g_\varepsilon$ is convex or not (and it's quite easy). (iii) and (iv) are actually needed just for small $\varepsilon$}
\begin{lemma}\label{lemma:g}
Let $\beta > \gamma$. Then: 
\begin{enumerate}
    \item[(i)] given $\varepsilon \ge 0$, for every $(x,y)$ in $A$ with $x < 1$, $H_\beta(x,y) = H_\beta(\gamma/\beta,0) + \varepsilon$ if and only if $y = g_\varepsilon(x)$;

    \item[(ii)] if $\beta \ge 1$, $g_\varepsilon(x)$ is strictly convex for every $\varepsilon \ge 0$. If $\beta < 1$, let
    $$
    \varepsilon_\beta = \frac{(1-\gamma/\beta)^{1-\beta}}{1-\beta} = -H_\beta(\gamma/\beta,0)>0\,.
    $$
Then, $g_\varepsilon(x)$ is strictly convex if $\varepsilon< \varepsilon_\beta$, affine if $\varepsilon = \varepsilon_\beta$, and strictly concave if $\varepsilon > \varepsilon_\beta$;

    \item[(iii)]
    for every $\varepsilon \ge 0$ such that $g_\varepsilon(x)$ is convex, $g_\varepsilon(x)$ always admits a unique stationary point $(\hat x_\varepsilon,\hat y_\varepsilon)$ in $A$ given by
    \begin{equation}\label{eq:xy}    
        \hat{x}_{\varepsilon} =\begin{cases}
        1 - (1 - \gamma)e^{- \varepsilon}, & \beta = 1\\
        %1 - \displaystyle\frac{1}{\big((1 - \gamma/\beta)^{1 - \beta}-\varepsilon(1 - \beta)\big)^{1/(\beta - 1)}} 
        1-((1-\beta)C_{\varepsilon,\beta})^{1/(1-\beta)}
        & \beta \neq 1\,,
        \end{cases}\qquad \hat{y}_{\varepsilon} = \hat x_\varepsilon - \frac{\gamma}{\beta}\,.
\end{equation}
Moreover, for every $\varepsilon \ge 0$
\begin{equation}\label{eq:mono}
\frac{\partial \hat x_\varepsilon}{\partial \varepsilon} = \frac{\partial \hat y_\varepsilon}{\partial \varepsilon} > 0\,;
\end{equation}
        \item[(iv)] for every $\varepsilon \ge 0$ such that $g_{\epsilon}$ is convex,
        \begin{equation*}
            \hat y_\varepsilon \leq \Big(1 - \displaystyle\frac{\gamma}{\beta}\Big)^{\beta}\varepsilon\,;
        \end{equation*}
        \item[(v)] for a sufficiently small $\varepsilon \ge 0$, there exists finite $K_\varepsilon>0$ such that
        $$
        g_{\varepsilon}(x) \leq g_{\varepsilon}(\hat{x}_{\varepsilon}) + K_\varepsilon (x-\hat{x}_{\varepsilon})^2\,, \quad \forall x \in [\hat{x}_{\varepsilon} - \sqrt{\varepsilon}, \hat{x}_{\varepsilon} + \sqrt{\varepsilon}]\,.
        $$
    \end{enumerate}
\end{lemma}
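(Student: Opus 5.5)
The plan is to work directly from the explicit formula for $H_\beta$ and reduce every item to elementary one-variable calculus on $g_\varepsilon$. The key observation is that $H_\beta$ is affine in $y$ with positive slope $(1-x)^{-\beta}$ (Remark~\ref{remark:H}). Hence each level set is the graph of a function of $x$.

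\textbf{Item (i).} For $\beta\neq1$, first evaluate
$$H_\beta(\gamma/\beta,0)=-(1-\gamma/\beta)^{1-\beta}\Big(1+\tfrac{\beta}{1-\beta}\Big)=-\frac{(1-\gamma/\beta)^{1-\beta}}{1-\beta}.$$
Then solve $H_\beta(x,y)=H_\beta(\gamma/\beta,0)+\varepsilon$ for $y$ by multiplying by $(1-x)^{\beta}$. This gives exactly
$$y=1-\tfrac{\gamma}{\beta}+\tfrac{\beta}{1-\beta}(1-x)-C_{\varepsilon,\beta}(1-x)^\beta,$$
with $C_{\varepsilon,\beta}$ as in \eqref{eq:C}. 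The case $\beta=1$ is the same computation with the logarithm. Uniqueness of the solution $y$ follows from strict monotonicity in $y$. The identity $\varepsilon_\beta=-H_\beta(\gamma/\beta,0)$ claimed in (ii) is the same evaluation.

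\textbf{Item (ii).} Differentiate twice. For $\beta\neq1$,
$$g_\varepsilon''(x)=C_{\varepsilon,\beta}\,\beta(1-\beta)(1-x)^{\beta-2}.$$
For $\beta=1$, $g_\varepsilon''(x)=1/(1-x)>0$. The sign is therefore that of $(1-\beta)C_{\varepsilon,\beta}=(1-\gamma/\beta)^{1-\beta}-(1-\beta)\varepsilon$.
\begin{itemize}
\item If $\beta>1$, this quantity is positive for every $\varepsilon\ge0$, so $g_\varepsilon$ is strictly convex.
\item If $\beta<1$, it is positive, zero or negative according as $\varepsilon<$, $=$ or $>\varepsilon_\beta$, which gives the strictly convex, affine and strictly concave cases.
\end{itemize}

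\textbf{Item (iii).} Solve $g_\varepsilon'(x)=-\tfrac{\beta}{1-\beta}+C_{\varepsilon,\beta}\beta(1-x)^{\beta-1}=0$. This gives $1-\hat x_\varepsilon=((1-\beta)C_{\varepsilon,\beta})^{1/(1-\beta)}$, and $1-\hat x_\varepsilon=(1-\gamma)e^{-\varepsilon}$ when $\beta=1$. Uniqueness follows from strict convexity.

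To get $\hat y_\varepsilon$ without plugging back in, I will use the dynamics. Along trajectories, $dy/dx=(\beta(x-y)-\gamma)/(1-x)$. Since level curves of $H_\beta$ are unions of trajectories, $g_\varepsilon'(x)=(\beta(x-g_\varepsilon(x))-\gamma)/(1-x)$. This vanishes exactly when $g_\varepsilon(x)=x-\gamma/\beta$, so $\hat y_\varepsilon=\hat x_\varepsilon-\gamma/\beta$. Alternatively, one can substitute directly as a check.

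For monotonicity \eqref{eq:mono}, set $u=(1-\gamma/\beta)^{1-\beta}-(1-\beta)\varepsilon$. Differentiating $1-\hat x_\varepsilon=u^{1/(1-\beta)}$ in $\varepsilon$ yields the clean identity
$$\frac{\partial\hat x_\varepsilon}{\partial\varepsilon}=(1-\hat x_\varepsilon)^{\beta},$$
which also holds for $\beta=1$. It is positive as long as $\hat x_\varepsilon<1$, and it equals $\partial\hat y_\varepsilon/\partial\varepsilon$.

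Membership of the stationary point in $A$ follows from three facts: $\hat x_0=\gamma/\beta$, $\hat x_\varepsilon$ is increasing, and $\hat x_\varepsilon<1$ in the convex regime. Indeed, $u>0$ there, or $u\to+\infty$ when $\beta>1$ with a negative exponent. Together these give $0\le\hat y_\varepsilon\le\hat x_\varepsilon<1$.

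\textbf{Item (iv).} Integrate the identity above from $0$ to $\varepsilon$, using $\hat y_0=0$. Since $\hat x_s\ge\gamma/\beta$ for $s\ge0$, the integrand satisfies $(1-\hat x_s)^\beta\le(1-\gamma/\beta)^\beta$, and integrating gives the bound.

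\textbf{Item (v).} Choose $\varepsilon$ small enough that $g_\varepsilon$ is convex and $\hat x_\varepsilon+\sqrt\varepsilon\le x_0<1$ for some fixed $x_0$. This is possible because $\hat x_\varepsilon\to\gamma/\beta<1$. On the interval $[\hat x_\varepsilon-\sqrt\varepsilon,\hat x_\varepsilon+\sqrt\varepsilon]$, $g_\varepsilon''$ is continuous, hence bounded. Taylor's theorem with Lagrange remainder, together with $g_\varepsilon'(\hat x_\varepsilon)=0$, gives the inequality with $K_\varepsilon=\tfrac12\sup g_\varepsilon''$ over the interval.

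\textbf{Main obstacle.} The only delicate point is bookkeeping at the boundary. I need to ensure $\hat x_\varepsilon<1$ and that the interval in (v) stays inside $[0,1)$, where $H_\beta$ and $g_\varepsilon''$ are finite. I also need to handle $\beta>1$ versus $\beta<1$ signs carefully in the exponent $1/(1-\beta)$. I expect the derivative identity $\partial_\varepsilon\hat x_\varepsilon=(1-\hat x_\varepsilon)^\beta$ to resolve both (iii) and (iv) uniformly.
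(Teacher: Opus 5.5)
Your proposal is correct and follows the same route as the paper: solve the level-set equation explicitly, read convexity off the sign of $g_\varepsilon''$, solve $g_\varepsilon'=0$, and use Taylor's theorem with Lagrange remainder for (v). The differences are only local. You obtain $\hat y_\varepsilon=\hat x_\varepsilon-\gamma/\beta$ from the slope identity $g_\varepsilon'(x)=\bigl(\beta(x-g_\varepsilon(x))-\gamma\bigr)/(1-x)$ rather than by substituting $\hat x_\varepsilon$ back into $g_\varepsilon$. You prove (iv) by integrating $\partial_\varepsilon\hat x_\varepsilon=(1-\hat x_\varepsilon)^\beta\le(1-\gamma/\beta)^\beta$ instead of computing $\hat x_\varepsilon''$ and using concavity, which is the same fact in first-order form. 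Like the paper, you tacitly read ``convex'' in (iii)--(iv) as strictly convex. That reading is needed, since for $\beta<1$ the affine curve $g_{\varepsilon_\beta}$ has no stationary point.
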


\begin{proof}
See the Appendix.
\end{proof}

\section{Asymptotic and Transient Behavior}\label{sec:beh}

Building upon the existence of an invariant of motion and the geometry of its level sets, we now broaden our scope to investigate the global convergence properties of the system and the transient behavior of the dynamics.

\subsection{Asymptotic Behavior}\label{sec:asymptotic}
While local analysis characterizes the behavior near specific fixed points, it is imperative to determine the ultimate fate of the population state from any arbitrary initial condition within the feasible domain. We shall consider two cases: below the threshold ($\beta \le \gamma$) and above the threshold ($\beta > \gamma$). We start by considering the former case.

\begin{proposition}
Let $\beta \le \gamma$. Then, for every initial condition $(x(0),y(0)) = (x_0,y_0)$ in $A$: 
\begin{enumerate}
    \item[(i)] if $x_0 < 1$, then
\begin{equation}\label{eq:as_below}
(x(t),y(t)) \xrightarrow{t \to + \infty} (x^*,0)
\end{equation}
where $(x^*,0)$ is the unique point in $A$ such that $H_\beta(x_0,y_0) = H_\beta(x^*,0)$;
\item[(ii)] if $x_0 = 1$, then
\begin{equation}\label{eq:as_below2}
(x(t),y(t)) \xrightarrow{t \to + \infty} (1,0)\,.\end{equation}
\end{enumerate} 
\end{proposition}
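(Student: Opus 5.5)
The plan is to use monotonicity of the two coordinates together with the invariant of Proposition~\ref{prop:H}.

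\emph{Step 1: monotonicity.} When $\beta \le \gamma$, for every $(x,y)$ in $A$ we have
$$\dot y = y\big(\beta(x-y)-\gamma\big) \le y(\beta-\gamma) \le 0,$$
since $x-y\le 1$. So $y(t)$ is non-increasing, and by Proposition~\ref{proposition:basic}(ii) $x(t)$ is non-decreasing. Both coordinates are bounded in $[0,1]$, so $x(t)\to x_\infty$ and $y(t)\to y_\infty$. The $\omega$-limit set is then the single point $(x_\infty,y_\infty)$, which must be an equilibrium. By Proposition~\ref{prop:equilibria}(i) this gives $y_\infty=0$.

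\emph{Step 2: case $x_0=1$, item (ii).} Here $\dot x=0$, so $x(t)\equiv 1$, and Step 1 gives $y(t)\to 0$.

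\emph{Step 3: case $x_0<1$, item (i).} This is the main obstacle: we must show $x_\infty<1$ and identify it through $H_\beta$, which is singular at $x=1$.

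If $y_0=0$, the initial point is an equilibrium and the claim is trivial. Otherwise $x(t)<1$ for all $t$, because $1-x$ solves $\frac{d}{dt}(1-x)=-(1-x)y$ and so never vanishes in finite time. Proposition~\ref{prop:H} then gives
$$H_\beta(x(t),y(t)) = H_\beta(x_0,y_0) \quad \text{for all } t\ge 0 .$$

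To exclude $x_\infty=1$, multiply $H_\beta$ by $(1-x)^\beta$ for $\beta\neq1$:
$$\Big(y+\frac{\gamma}{\beta}-1\Big) - \frac{\beta}{1-\beta}(1-x) = H_\beta(x_0,y_0)\,(1-x)^\beta .$$
If $x\to1$ and $y\to0$, the left side tends to $\gamma/\beta-1$ and the right side tends to $0$. This is a contradiction when $\beta<\gamma$.

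For $\beta=1$, multiply by $(1-x)$ instead. The left side tends to $\gamma-1$, while $(1-x)\ln(1-x)\to0$, so again $\gamma=1$ is forced.

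The borderline $\beta=\gamma$ needs a separate argument, and I would try a direct estimate. There $\dot y\le -\beta y^2$, so $y(t)\le C/(1+t)$. This bound alone does not make $\int_0^\infty y\,dt$ finite, which is what we need since $\ln(1-x(t)) = \ln(1-x_0) - \int_0^t y$. To sharpen it, use $x-y\le x_\infty-y$: the recruitment term is then $\beta(x-y)-\gamma \le \beta(x_\infty-1) - \beta y$. If $x_\infty<1$ we are already done, so suppose $x_\infty=1$. Instead of integrating, use the invariant again. With $\gamma/\beta=1$ the identity reads
$$y - \frac{\beta}{1-\beta}(1-x) = H\,(1-x)^\beta$$
(similarly for $\beta=1$). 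From this, $y/(1-x)$ tends to $\beta/(1-\beta)$ if $\beta<1$, and one checks the sign. When $\beta>1$, the term $H(1-x)^\beta$ is dominated by $1-x$, so $y = -\frac{\beta}{\beta-1}(1-x)+o(1-x)<0$, a contradiction. When $\beta<1$, the ratio tends to $\beta/(1-\beta)$, so $\dot x = (1-x)y \sim c(1-x)^2$; I would then check this against $\dot y = y(\beta(x-y)-\beta)\approx -\beta y(1-x+y)$ for consistency. The ratio $y/(1-x)$ evolves as $\frac{d}{dt}\frac{y}{1-x} = \frac{y}{1-x}\big(-\beta(1-x)-\beta y+y\big)$. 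Plugging in the limit ratio makes the bracket negative of order $(1-x)$, which is compatible. So this case may genuinely require care, and I would resolve it last, possibly by an explicit integration along the level curve.

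\emph{Step 4: identify the limit.} Once $x_\infty<1$, continuity of $H_\beta$ on $[0,1)\times[0,1]$ gives $H_\beta(x_\infty,0)=H_\beta(x_0,y_0)$. Uniqueness of $x^*$ follows because $x\mapsto H_\beta(x,0)$ is strictly monotone on $[0,1)$. Its derivative is
$$\partial_x H_\beta(x,0) = \beta(1-x)^{-\beta-1}\big(\gamma/\beta - x\big),$$
which is positive for $x<1\le\gamma/\beta$. Finally, $x^*\in[x_0,1)$, so $(x^*,0)\in A$.
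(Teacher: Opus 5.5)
\textbf{Gap: the case $\beta=\gamma<1$ is left open, and it cannot be closed, because item (i) is false there.}

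What you do prove is correct. Your argument handles $\beta<\gamma$ and the borderline $\beta=\gamma>1$. For $\beta=\gamma=1$ you still need the one-line analogue: the orbit lies on $y=(1-x)\big(h+\ln(1-x)\big)$ with $h=H_\beta(x_0,y_0)$, and this is negative near $x=1$. Your Step 4, which gets existence of $x^*$ from the limit itself, is more careful than the paper's proof. The paper only invokes $\partial_x H_\beta>0$, which gives uniqueness of $x^*$. It gives neither existence nor the exclusion of $x(t)\to 1$.

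Now the open case. When $\beta=\gamma<1$, the orbit lies on
\[
y=(1-x)^{\beta}\Big(h+\frac{\beta}{1-\beta}(1-x)^{1-\beta}\Big).
\]
\emph{Case $h<0$.} The bracket vanishes at some $x^*\in[x_0,1)$, so $x(t)\le x^*$ and your Step 4 applies.

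\emph{Case $h\ge 0$}, i.e.\ $y_0\ge\frac{\beta}{1-\beta}(1-x_0)$. The bracket is positive on $[0,1)$, so $y(t)>0$ for all $t$, and the only admissible limit is $(1,0)$. Your claim that $y/(1-x)\to\beta/(1-\beta)$ holds only for $h=0$; for $h>0$ the ratio diverges.

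You can also see this directly from the ratio equation you derived. With $r=y/(1-x)$ it reads $\dot r=r(1-x)\big((1-\beta)r-\beta\big)$. So the region $r\ge\beta/(1-\beta)$ is preserved, which gives $\frac{d}{dt}(1-x)=-r(1-x)^2\le-\frac{\beta}{1-\beta}(1-x)^2$ and hence $x(t)\to1$.

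Meanwhile $H_\beta(x,0)=-\frac{\beta}{1-\beta}(1-x)^{1-\beta}<0\le h$ for every $x\in[0,1)$. So no point $(x^*,0)$ as in (i) exists. Concretely, take $\beta=\gamma=\frac12$ and $(x_0,y_0)=(0.9,0.5)\in A$. Then $h=0.4/\sqrt{0.1}>0$ and $(x(t),y(t))\to(1,0)$.

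The statement therefore has to exclude, or treat separately, the case $\beta=\gamma<1$ with $H_\beta(x_0,y_0)\ge 0$, where the limit is $(1,0)$. With that restriction, your argument plus the $h<0$ observation above is a complete proof.
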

\begin{proof}
It follows from Proposition \ref{proposition:basic}(ii) that $\dot x \ge 0$ and, since $\beta \le \gamma$, $\dot y \le 0$. Hence, $(x(t),y(t))$ must converge to an equilibrium point. Recall that all equilibrium points are in the form $(x^*,0)$ by Proposition \ref{prop:equilibria}(i). If $x_0 = 1$, then the claim follows from Proposition \ref{proposition:basic}(ii), since $x(t) = 1$ for every time $t \ge 0$, hence in particular $x^* = 1$. Consider now the case $x_0 < 1$, and note that $\beta \le \gamma$ implies $\partial H_\beta(x,y) / \partial x > 0$ for every $(x,y)$ in $A$ with $x < 1$. Hence, there exists a unique point $(x^*,0)$ such that $H_\beta(x_0,y_0) = H_\beta(x^*,0)$ and the dynamics must converge to such equilibrium point due to Proposition \ref{prop:H}.  
\end{proof}

\begin{figure}[thpb]
    \centering
    \begin{subfigure}{0.48\columnwidth}
        \centering
        \includegraphics[width=\linewidth]{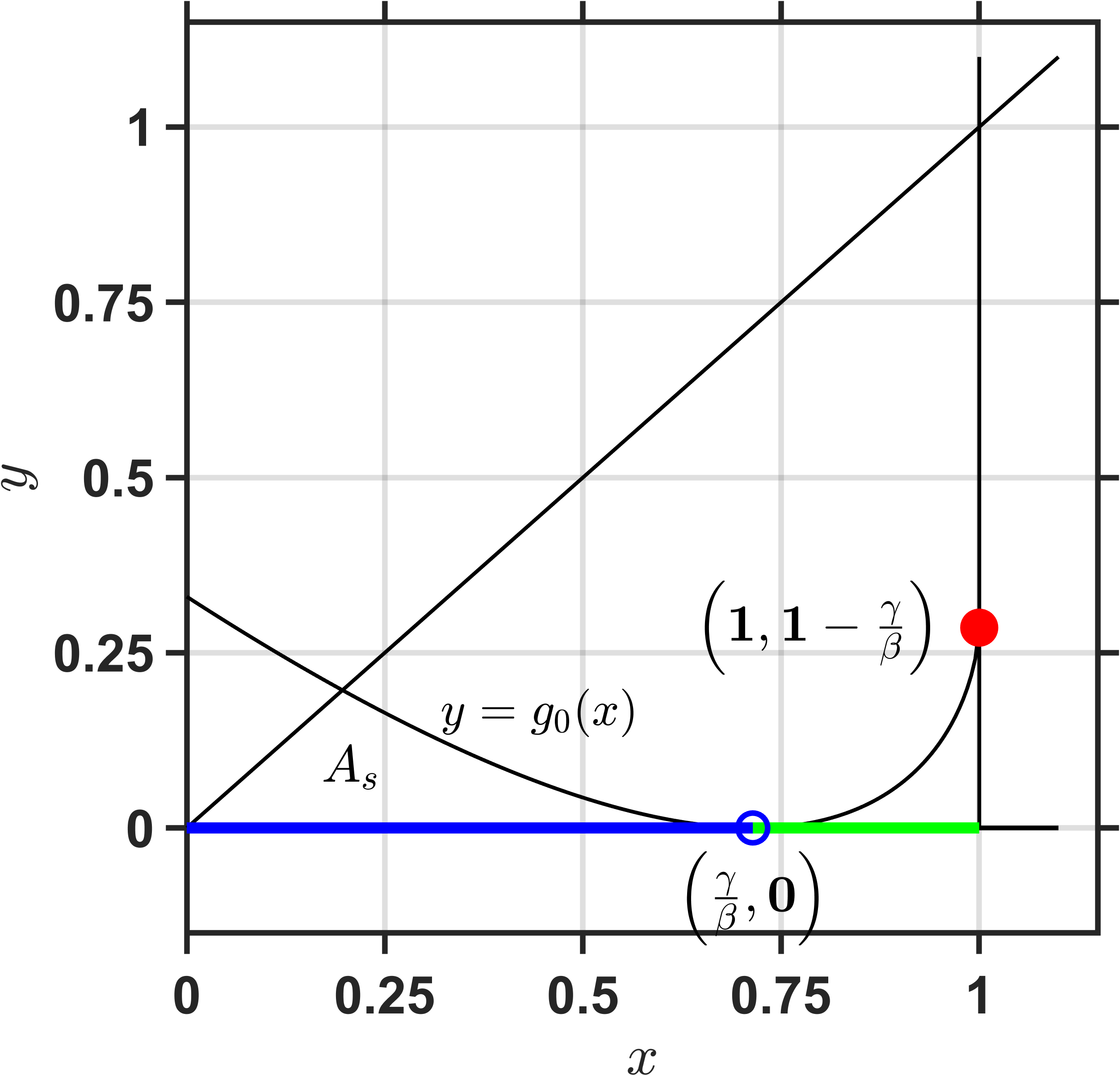}
        \caption{$\gamma < \beta$}
    \end{subfigure}
    \hfill
    \begin{subfigure}{0.48\columnwidth}
        \centering
        \includegraphics[width=\linewidth]{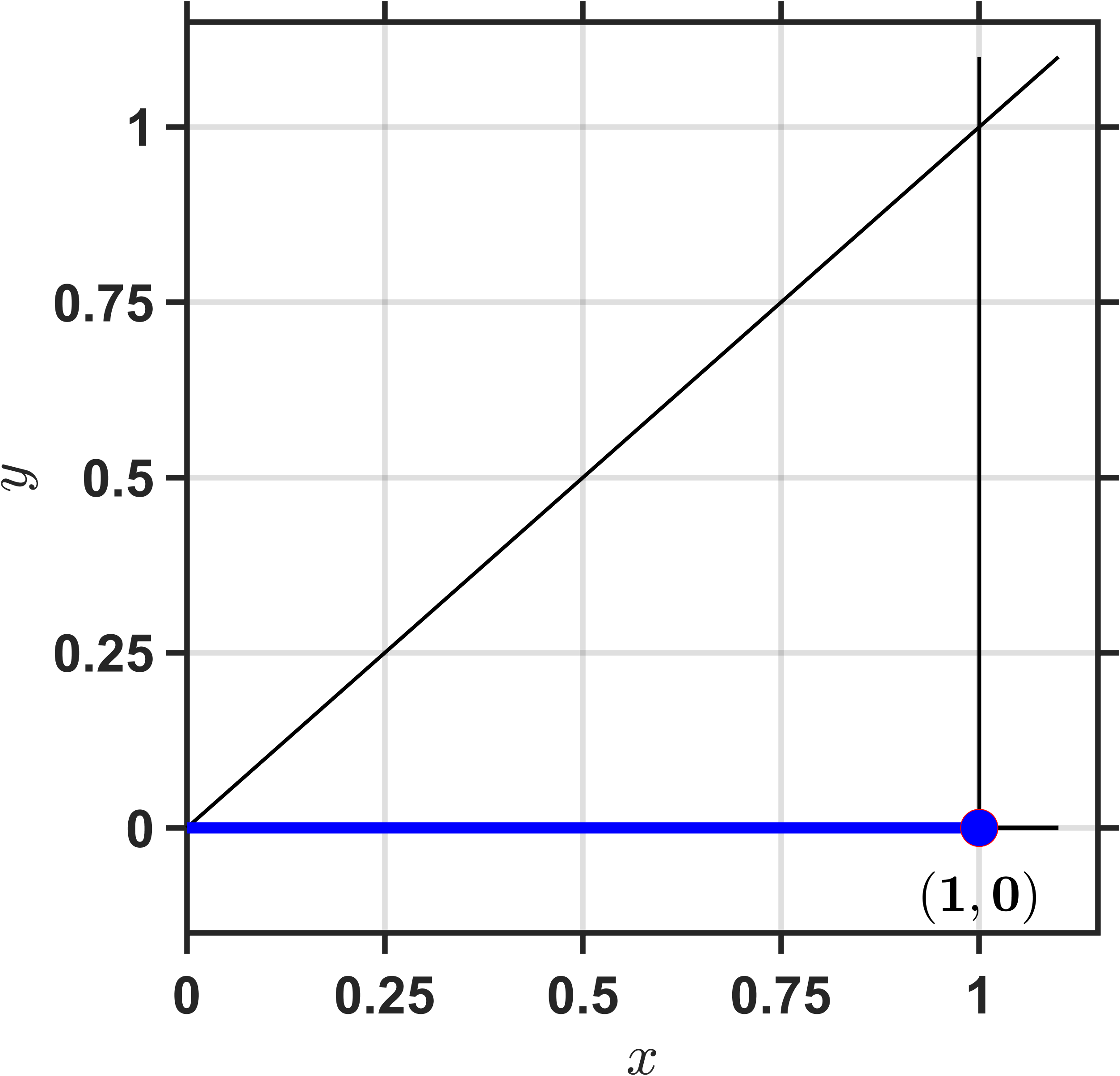}
        \caption{$\gamma \geq \beta$}
    \end{subfigure}
    \caption{Phase space diagrams below and above the critical threshold. For the parameter regime above the threshold, we illustrate equilibrium points with different stability properties, the curve $g_0$, and distinct dynamical regions. The blue equilibria are stable but not asymptotically stable, the red equilibria are asymptotically stable, and the green equilibria are unstable. \label{fig:comparison}}
\end{figure}

We now focus on the asymptotic system behavior above the threshold. 
To this end, we define the following subset of $A$.

\begin{definition}
Given $\beta > \gamma$, let
$$A_s = \Big\{(x, y)\in A: x \leq \frac{\gamma}{\beta}, \ y \le g_0(x)\Big\} \cup \{(x, y)\in A: y = 0\}\,.$$
\end{definition}

%\begin{figure}[thpb]
%    \centering
%    \includegraphics[width=0.35\textwidth]{set_extended_abstract.png}  
%    \caption{Geometric partitioning of the domain $A$ into regions $B$, $C$, and $D$. \textcolor{red}{no longer meaningful}}
%    \label{fig:basins_partition}  
%\end{figure}

\begin{theorem}
    Let $\beta > \gamma$. Then, for every initial condition $(x(0),y(0)) = (x_0,y_0)$ in $A$:
    \begin{enumerate}[label=(\roman*)]
        \item if $(x_0, y_0)\in A_s$, then $$(x(t), y(t)) \xrightarrow{t \to + \infty} (x^*, 0)\,,$$ where $x^*$ is the smallest solution of $H_\beta(x^*, 0) = H_\beta(x_0, y_0)$ if $y_0>0$, and $x^* = x_0$ if $y_0 = 0$.
        \item If $(x_0, y_0) \in A \setminus A_s$, then $$(x(t), y(t)) \xrightarrow{t \to + \infty} (1, y^*)\,, \qquad y^* = \displaystyle 1-\frac{\gamma}{\beta}\,.$$
    \end{enumerate}
\end{theorem}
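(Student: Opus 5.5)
The plan is to combine three ingredients: monotonicity of $x$ (Proposition~\ref{proposition:basic}(ii)), the invariant $H_\beta$ (Proposition~\ref{prop:H}), and the classification of equilibria (Proposition~\ref{prop:equilibria}(ii)). The first step is to show that every trajectory converges to some equilibrium.

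Since $x(t)$ is non-decreasing and bounded, it converges to some $\bar x\le 1$. By Lemma~\ref{lemma:g}(i) and Remark~\ref{remark:H}, for $x<1$ the level set of $H_\beta$ through $(x_0,y_0)$ is the graph $y=g_\varepsilon(x)$ with $\varepsilon=H_\beta(x_0,y_0)-H_\beta(\gamma/\beta,0)$. The case $\varepsilon<0$ has to be handled as well; there I would solve $H_\beta=c$ for $y$ explicitly, which is possible because $H_\beta$ is affine in $y$. Hence, as long as $x(t)<1$, we have $y(t)=G(x(t))$ for a continuous function $G$ determined by the level set. If $\bar x<1$, then $y(t)\to G(\bar x)$, and the limit must be an equilibrium because $\dot x\to0$ forces $G(\bar x)=0$. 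If $\bar x=1$, the $\omega$-limit set lies on $\{x=1\}$, where $\dot y=y(\beta(1-y)-\gamma)$, so $y$ tends to $0$ or to $y^*$. In every case the trajectory converges to an equilibrium.

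\textbf{Case (i).} If $y_0=0$, the point is an equilibrium and $x^*=x_0$. Otherwise $x_0\le\gamma/\beta$ and $y_0\le g_0(x_0)$. Since $\partial H_\beta/\partial y>0$, we get $H_\beta(x_0,y_0)\le H_\beta(\gamma/\beta,0)$, that is $\varepsilon\le0$. The level curve $G$ is then below $g_0$, and I expect it to vanish at some point $x^*\le\gamma/\beta$. Here I would use that $g_0$ is tangent to $y=0$ at $x=\gamma/\beta$, since $(\hat x_0,\hat y_0)=(\gamma/\beta,0)$ by \eqref{eq:xy}, together with convexity from Lemma~\ref{lemma:g}(ii). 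Because $x$ increases and $y=G(x)>0$ until $G$ hits zero, and a trajectory cannot reach an equilibrium in finite time, $x(t)$ increases to the smallest zero $x^*$ of $G$ to the right of $x_0$. This point also satisfies $H_\beta(x^*,0)=H_\beta(x_0,y_0)$. The stated "smallest solution" should follow from the shape of $G$: on $[0,x_0]$, $G$ stays positive because $\dot y\le0$ when $x-y\le\gamma/\beta$, so that region is traversed with $y$ non-increasing. I expect some care here to verify the sign claims.

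\textbf{Case (ii).} Here I must exclude convergence to any $(x^*,0)$.
\begin{itemize}
\item If $x_0=1$, then $y_0>0$ and the scalar dynamics on $\{x=1\}$ give $y\to y^*$.
\item If $x_0<1$ with $x_0>\gamma/\beta$ and $y_0>0$, then I expect $\varepsilon>0$: the level set $g_0$ lies on or below zero for $x\ge\gamma/\beta$, as it is convex, tangent at $(\gamma/\beta,0)$ and minimal there. By Lemma~\ref{lemma:g}(iii)--(iv), $g_\varepsilon$ then attains its minimum $\hat y_\varepsilon>0$ (this needs $\partial\hat y/\partial\varepsilon>0$ from \eqref{eq:mono} and $\hat y_0=0$). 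In the concave regime $\beta<1$, $\varepsilon\ge\varepsilon_\beta$, positivity of $g_\varepsilon$ on $[0,1)$ should come from its endpoint values.
\item If $x_0\le\gamma/\beta$ with $y_0>g_0(x_0)$, then $\varepsilon>0$ directly.
\end{itemize}
In all these cases $G=g_\varepsilon>0$ on $[x_0,1)$. No equilibrium $(x^*,0)$ with $x^*<1$ lies on the level set, so $\bar x=1$. It then remains to rule out $y\to0$ as $x\to1$. For $\beta\ne1$, $g_\varepsilon(1)=1-\gamma/\beta=y^*$ directly from \eqref{eq:g}. For $\beta=1$, the convention $0\log0=0$ gives $g_\varepsilon(1)=1-\gamma=y^*$. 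Therefore $y(t)=g_\varepsilon(x(t))\to y^*$.

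I expect the main obstacle to be the sign analysis of the level sets: establishing $\varepsilon\le0$ on $A_s$ and $\varepsilon>0$ off $A_s$, and showing $g_\varepsilon>0$ on $[x_0,1)$ when $\varepsilon>0$, uniformly across the convex, affine and concave regimes of Lemma~\ref{lemma:g}(ii).
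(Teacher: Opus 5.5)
\textbf{Gap in Case (ii), second bullet.} This bullet rests on a false geometric claim. The curve $g_0$ is strictly convex and attains its minimum value $0$ at $\hat x_0=\gamma/\beta$ (Lemma~\ref{lemma:g}(ii)--(iii)). So $g_0\ge 0$ everywhere and $g_0>0$ on $(\gamma/\beta,1]$, not $g_0\le 0$; you yourself use $g_0(1)=1-\gamma/\beta>0$ at the end.

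Now take initial conditions with $\gamma/\beta<x_0<1$ and $0<y_0\le g_0(x_0)$. This set is nonempty and lies in $A\setminus A_s$. For these points, $\partial H_\beta/\partial y>0$ gives $H_\beta(x_0,y_0)\le H_\beta(x_0,g_0(x_0))=H_\beta(\gamma/\beta,0)$, i.e.\ $\varepsilon\le 0$, not $\varepsilon>0$. The level curve is then $g_\varepsilon=g_0+\varepsilon(1-x)^\beta$, with $g_\varepsilon(\gamma/\beta)=\varepsilon(1-\gamma/\beta)^\beta\le 0$. There is no positive minimum $\hat y_\varepsilon$, and Lemma~\ref{lemma:g}(iii)--(iv) cannot give positivity. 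For example, with $\beta=2$, $\gamma=1$ and $(x_0,y_0)=(0.75,0.05)$ one gets $\varepsilon=-1.2$, and the level curve vanishes near $x\approx 0.72$. The claim ``$G>0$ on $[x_0,1)$'' is still true for such points, but only because the zeros of $g_\varepsilon$ lie to the left of $x_0$. Your argument does not establish this.

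This is exactly the region the paper isolates as $D=\{0<y\le g_0(x),\ x>\gamma/\beta\}$. It treats $D$ separately from $C=\{y>g_0(x)\}$, where $\varepsilon>0$ and your argument does work. The paper observes that $D$ is positively invariant with $\dot x,\dot y\ge 0$ on it, which yields monotone convergence to $(1,y^*)$. The sign of $\dot y$ holds because the chord of the convex $g_0$ from $(\gamma/\beta,0)$ to $(1,1-\gamma/\beta)$ is $y=x-\gamma/\beta$, so $g_0(x)\le x-\gamma/\beta$ there.

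Within your framework, a cheaper repair covers the whole bullet at once. Since $x(t)\ge x_0>\gamma/\beta$, you have $\dot y=y(\beta(x-y)-\gamma)>0$ whenever $0<y<x_0-\gamma/\beta$. Hence $y(t)\ge\min\{y_0,\,x_0-\gamma/\beta\}>0$ for all $t$, so $G$ has no zero on $[x_0,1)$ whatever the sign of $\varepsilon$. Your observation that $g_\varepsilon(1)=y^*$ for every real $\varepsilon$ then finishes.

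With that fix, your level-set route is a valid alternative to the paper's argument (invariance of $C$ plus a $\liminf$/$\limsup$ comparison for $y$). It uses the intermediate value theorem for case (i) and $y(t)=g_\varepsilon(x(t))\to g_\varepsilon(1)$ for case (ii).
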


\begin{proof}
Since $x(t)$ is monotonically non-decreasing by Proposition \ref{proposition:basic}(ii), then $x(t)$ admits a limit $x^*$ for every initial condition $(x_0,y_0)$. 

(i) If $y_0 = 0$, then $(x_0,y_0)$ is an equilibrium and the claim is trivial. Assume now that $y_0>0$. If $(x_0,y_0)$ belongs to $A_s$, we have that $y_0 < g_0(x_0)$, and $(x(t),y(t))$ can leave $A_s$ only by crossing the curve $y = g_0(x)$, which is prevented by Proposition \ref{prop:H} and Lemma \ref{lemma:g}(i) with $\varepsilon = 0$. If instead $(x_0,y_0)$ lies on the curve $y = g_0(x)$, then $(x(t),y(t))$ moves along the curve until convergence to the equilibrium point $(\gamma/\beta,0)$. This yields $x^* \le \gamma / \beta$. It then follows from Proposition \ref{proposition:basic}(ii) that $x(t) \le \gamma/\beta$ for every $t \ge 0$, hence, from \eqref{sys_dynamics}, $\dot y(t) \le 0$. Therefore, $(x(t),y(t))$ must converge to an equilibrium point $(x^*,0)$ in $A_s$. In particular, since $x(t)$ is continuous and non-decreasing in time, and since every element $(x,0)$ is an equilibrium point, $x^*$ is the smallest element greater than $x_0$ such that $H_\beta(x^*,0) = H_\beta(x_0,y_0)$ due to Proposition \ref{prop:H}.

(ii) Consider the sets $$C = \{(x,y) \in A: y > g_0(x)\}, \quad D = \{(x,y) \in A: 0 < y \le g_0(x), \ x > \gamma/\beta\}$$
and note that $A \setminus A_s = C \cup D$. We now prove that every initial condition in $C$ and $D$ converges to $(1,x^*)$. 

Consider $(x_0, y_0)$ in $C$. Since the trajectory $(x(t),y(t))$ cannot intercept the curve $y = g_0(x)$, then $C$ is positively invariant. We now argue by contradiction. Assume that $x^* \le \gamma /\beta$. By the same arguments used to prove item (i), this would imply that the limit point is in the form $(x^*,0)$, contradicting the fact that $C$ is positively invariant. Hence, $x^* > \gamma /\beta$.
Fix now a small $\delta>0$ such that
$$
x^*-\delta>\frac{\gamma}{\beta}.
$$
Since $x(t)\to x^*$, there exists $T_\delta>0$ such that
$$
x^* - \delta \le x(t) \le x^*\,,
\qquad \forall t\ge T_\delta\,.
$$
Hence, for every $t\ge T_\delta$,
$$
\big(\beta(x^*-\delta-y(t))-\gamma\big)y(t) \le \dot y(t)
\le
\big(\beta(x^*-y(t))-\gamma\big)y(t).
$$
It thus follows that
$$
x^*-\delta-\frac{\gamma}{\beta}
\le
\liminf_{t\to+\infty}y(t)
\le
\limsup_{t\to+\infty}y(t)
\le
x^*-\frac{\gamma}{\beta}.
$$
Letting $\delta\to0$, we conclude that
$$
\lim_{t \to + \infty} (x(t),y(t)) = \Big(x^*,x^*-\frac{\gamma}{\beta}\Big).$$
Since the only equilibrium point with positive second component is $(1,1-\gamma/\beta),
$
we conclude that $x^*=1$, and therefore
$$
\lim_{t \to +\infty} (x(t),y(t))
=
\left(1,1-\frac{\gamma}{\beta}\right) = (1,y^*)\,,
$$
concluding the proof of the case $(x_0,y_0)$ in $C$.

Finally, consider the case where $(x_0,y_0)$ belongs to $D$. Note that $D$ is positively invariant, and $\dot x,\dot y \ge 0$ for every $(x,y)$ in $D$. This proves that $x(t),y(t)$ admits a limit point and such limit point must be $(1,y^*)$ by Proposition \ref{prop:equilibria}.
\end{proof}

%\begin{figure}[H]
    %\centering
    
   % \includegraphics[width=0.7\textwidth]%{vector_field.png}  
  %  \caption{Vector field of the system}
 %   \label{fig:vector_field}
%\end{figure}

 %   \begin{figure}[H]
 %       \centering
        %\includegraphics[width=0.7\textwidth]{gamma_greater_beta_vary_C.png}
        %\caption{Level curves of the invariant of motion}
        %\label{fig:level_curve}
    %\end{figure}

\subsection{Transient Behavior}
While Section \ref{sec:asymptotic} completely determines the asymptotic behavior of the system through the invariant of motion $H_\beta$, the same invariant also provides detailed information on the transient evolution. In this section we exploit the geometry of the level sets of $H_\beta$ to characterize both the incubation phenomenon and the monotonicity of the advocacy level.

The incubation phenomenon refers to a prolonged period during which $y$ remains very close to zero, and the variation in $x$ is minimal. Following this phase, both 
$x$ and $y$ experience a sudden and rapid increase, ultimately converging to an equilibrium point. This phenomenon is characterized by a slow buildup followed by a sharp transition towards stability, which is shown in Figure \ref{fig:incubation_plot}. 

\begin{figure}[htb]
    \centering
    \includegraphics[width=8.4cm]{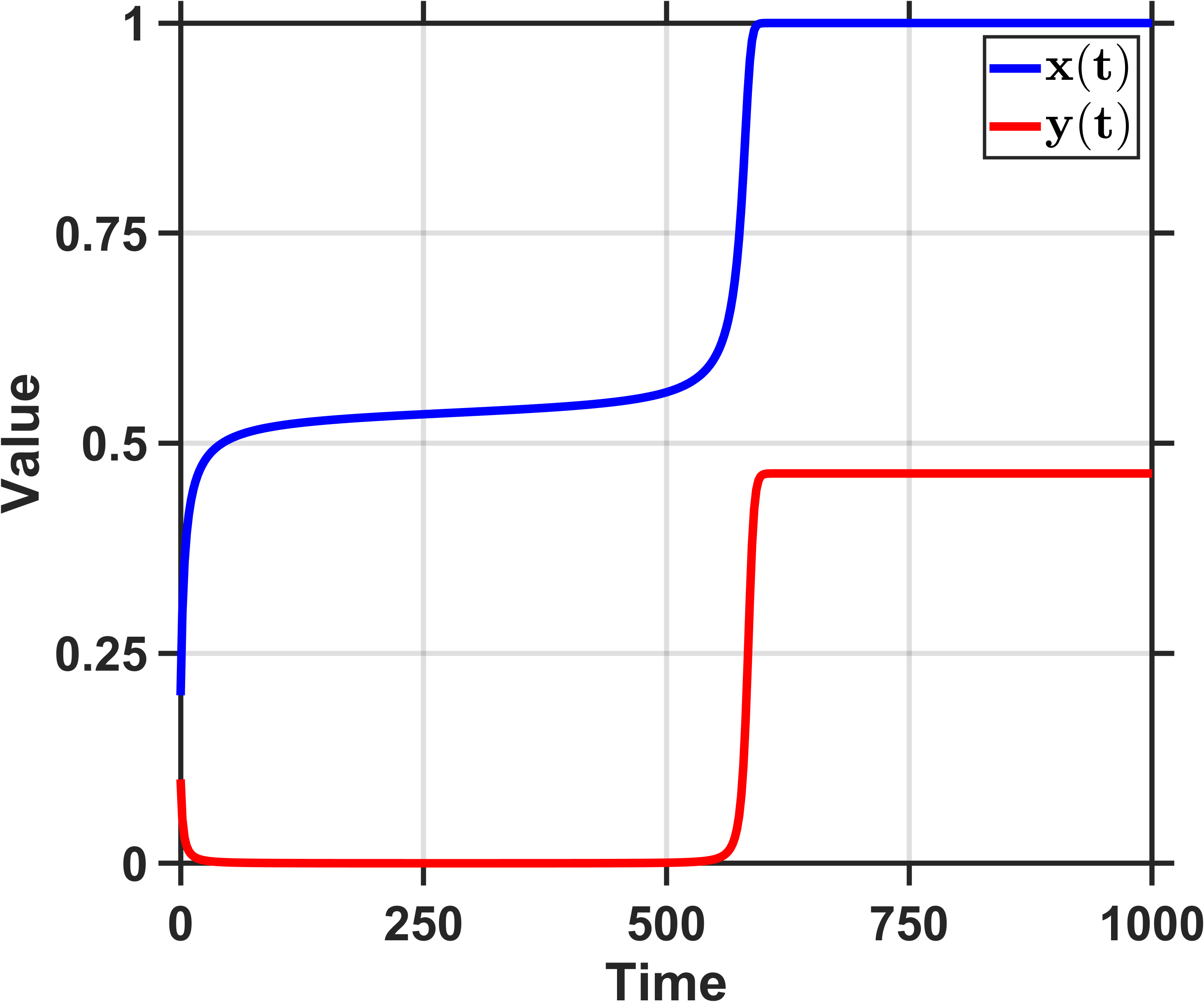} 
    \caption{A typical trajectory of $(x(t),y(t))$ exhibiting the incubation-outbreak pattern. The vocal advocacy $y(t)$ exhibits a U-shaped trajectory: initial decay is followed by a long quiescent period, ending in a sudden explosion as latent belief $x(t)$ crosses the critical value $\hat x_0 = \gamma/\beta$.}
    \label{fig:incubation_plot}
\end{figure}

It is apparent that the incubation phenomenon occurs only above the threshold, i.e., if $\beta > \gamma$. 
The next result provides a rigorous proof demonstrating that the duration of the incubation period diverges to infinity when the initial condition lies on the curve $y = g_\varepsilon(x)$ for small values of $\varepsilon > 0$. Towards this goal, given a small enough $\varepsilon > 0$, we  
define two time instants $t^-_{\varepsilon}$, $t^+_{\varepsilon}$ such that
$$x(t^-_{\varepsilon}) = \hat x_\varepsilon -\sqrt\varepsilon, \quad x(t^+_{\varepsilon}) = \hat x_\varepsilon + \sqrt\varepsilon\,,$$
whose uniqueness follows from the monotonicity of $x(t)$.

\begin{theorem}\label{thm:incubation}
Let $\beta > \gamma$. Then, given a small enough $\varepsilon > 0$ and an initial condition $(x_0,g_\varepsilon(x_0))$ with $x_0 \le \hat x_\varepsilon - \sqrt{\varepsilon}$, the relation
$$t^+_{\varepsilon} - t^-_{\varepsilon} \geq \frac{G_\varepsilon}{\sqrt{\varepsilon}}\,,$$ 
holds true, where $$G_\varepsilon = \displaystyle\frac{2}{(1 - \gamma/\beta)^{\beta} + K_\varepsilon}\,,$$
is bounded away from zero as $\varepsilon$ vanishes.
\end{theorem}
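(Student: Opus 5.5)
The plan is to work on the level set fixed by the initial condition. Since $(x_0,g_\varepsilon(x_0))$ satisfies $H_\beta = H_\beta(\gamma/\beta,0)+\varepsilon$, Proposition \ref{prop:H} and Lemma \ref{lemma:g}(i) give $y(t) = g_\varepsilon(x(t))$ for all $t \ge 0$ with $x(t)<1$. The $x$-equation then becomes a scalar autonomous ODE,
$$\dot x = (1-x)\,g_\varepsilon(x).$$
Because $x$ is non-decreasing (Proposition \ref{proposition:basic}(ii)) and $x_0 \le \hat x_\varepsilon - \sqrt\varepsilon$, the time spent crossing $I_\varepsilon = [\hat x_\varepsilon-\sqrt\varepsilon,\hat x_\varepsilon+\sqrt\varepsilon]$ is
$$t^+_\varepsilon - t^-_\varepsilon = \int_{\hat x_\varepsilon-\sqrt\varepsilon}^{\hat x_\varepsilon+\sqrt\varepsilon} \frac{dx}{(1-x)\,g_\varepsilon(x)}.$$
For this change of variables to be legitimate, $g_\varepsilon$ must be positive on $I_\varepsilon$. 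This holds for small $\varepsilon$, because $g_\varepsilon$ is convex and its minimum value $\hat y_\varepsilon>0$ by \eqref{eq:mono} together with $\hat y_0 = 0$. The trajectory then actually reaches $\hat x_\varepsilon+\sqrt\varepsilon$, since $\dot x$ is bounded below by a positive constant on $I_\varepsilon$; this makes $t^\pm_\varepsilon$ well defined.

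Next I will bound the integrand from below. Using $1-x\le 1$, I get $\frac{1}{(1-x)g_\varepsilon(x)} \ge \frac{1}{g_\varepsilon(x)}$. Lemma \ref{lemma:g}(v) gives $g_\varepsilon(x) \le \hat y_\varepsilon + K_\varepsilon (x-\hat x_\varepsilon)^2$ on $I_\varepsilon$, where $g_\varepsilon(\hat x_\varepsilon)=\hat y_\varepsilon$ by (iii). Lemma \ref{lemma:g}(iv) gives $\hat y_\varepsilon \le (1-\gamma/\beta)^\beta \varepsilon$. Since $(x-\hat x_\varepsilon)^2 \le \varepsilon$ on $I_\varepsilon$, this yields the uniform bound
$$g_\varepsilon(x) \le \big((1-\gamma/\beta)^\beta + K_\varepsilon\big)\varepsilon \quad \text{on } I_\varepsilon.$$
Integrating the reciprocal over an interval of length $2\sqrt\varepsilon$ then gives
$$t^+_\varepsilon - t^-_\varepsilon \ge \frac{2\sqrt\varepsilon}{\big((1-\gamma/\beta)^\beta+K_\varepsilon\big)\varepsilon} = \frac{G_\varepsilon}{\sqrt\varepsilon}.$$
This crude bound suffices, and a sharper arctan-type integral is unnecessary.

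The remaining claim, and the main obstacle, is that $G_\varepsilon$ stays bounded away from zero, which means $K_\varepsilon$ must remain bounded as $\varepsilon\to0$. My first approach is to take $K_\varepsilon = \tfrac12\sup_{I_\varepsilon} g_\varepsilon''$ via Taylor's theorem with $g_\varepsilon'(\hat x_\varepsilon)=0$. From \eqref{eq:g}, $g_\varepsilon''(x) = -C_{\varepsilon,\beta}\beta(\beta-1)(1-x)^{\beta-2}$ for $\beta\ne1$, and $g_\varepsilon''(x)=1/(1-x)$ for $\beta=1$. These expressions are continuous in $(x,\varepsilon)$ near $(\gamma/\beta,0)$, because $\hat x_\varepsilon\to\gamma/\beta<1$ and $C_{\varepsilon,\beta}\to C_{0,\beta}$. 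Hence $K_\varepsilon \to \tfrac12 g_0''(\gamma/\beta)<\infty$. I expect that checking this uniformity, and making sure $I_\varepsilon$ stays away from $x=1$, is the only delicate point; with it, $G_\varepsilon \ge 2/\big((1-\gamma/\beta)^\beta + \tfrac12 g_0''(\gamma/\beta)+1\big)$ for all small $\varepsilon$.
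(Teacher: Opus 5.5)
Your proof is correct and is essentially the paper's argument. The paper integrates $\dot x=(1-x)y$ over $[t^-_\varepsilon,t^+_\varepsilon]$ and bounds $(1-x)y\le g_\varepsilon(x)\le\big((1-\gamma/\beta)^\beta+K_\varepsilon\big)\varepsilon$ using Lemma~\ref{lemma:g}(i),(iv),(v), which is the same maximal-speed bound you obtain after changing variables to $\int dx/\dot x$. Your extra checks (positivity of $g_\varepsilon$, so that $t^\pm_\varepsilon$ are well defined, and the continuity argument giving $K_\varepsilon\to\tfrac12 g_0''(\gamma/\beta)$) make explicit the uniform boundedness of $K_\varepsilon$ that the paper only asserts.
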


\begin{proof}
Note that
    \begin{align*}
        2\sqrt{\varepsilon} &= x(t^+_{\varepsilon}) - x(t^-_{\varepsilon}) = \int_{t^-_{\varepsilon}}^{t^+_{\varepsilon}}\dot{x}(t)dt = \int_{t^-_{\varepsilon}}^{t^+_\varepsilon}(1 - x(t))y(t)dt \leq \int_{t^-_{\varepsilon}}^{t^+_{\varepsilon}}y(t)dt \\
            & = \int_{t^-_{\varepsilon}}^{t^+_{\varepsilon}}g_{\varepsilon}(x(t))dt \leq \int_{t^-_{\varepsilon}}^{t^+_{\varepsilon}}(g_{\varepsilon}(\hat{x}_\varepsilon)+K_{\varepsilon}(x(t)-\hat{x}_{\varepsilon})^2)dt \\[5pt]
            & \leq (t^+_{\varepsilon}-t^-_{\varepsilon})\Big(1 - \displaystyle\frac{\gamma}{\beta}\Big)^{\beta}\varepsilon + K_{\varepsilon}(t^+_{\varepsilon}-t^-_{\varepsilon})\varepsilon.
    \end{align*}
    where the fourth equality follows from Lemma \ref{lemma:g}(i) and Proposition \ref{prop:H}, the second inequality follows from Lemma \ref{lemma:g}(v), and the last inequality from Lemma \ref{lemma:g}(iv) and from Proposition \ref{proposition:basic}(ii), which implies that $x(t) \in [\hat x_\varepsilon - \sqrt{\varepsilon}, \hat x_\varepsilon + \sqrt{\varepsilon}]$ for every time $t \in [t_\varepsilon^-,t_\varepsilon^+]$. Inverting this formula we get the statement.
\end{proof}

Theorem \ref{thm:incubation} provides a lower bound for the time needed to go from $\hat x_{\varepsilon} - \sqrt\varepsilon$ to $\hat x_{\varepsilon} + \sqrt\varepsilon$ supporters for initial conditions on the curve $y = g_\varepsilon(x)$. Note that such lower bound diverges as $\varepsilon$ tends to vanish due to the fact that $K_{\varepsilon}$ is bounded for every $\varepsilon \ge 0$ (cf. Lemma \ref{lemma:g}(v)), hence the trajectory spends at least $O(\varepsilon^{-1/2})$ time in a $O(\sqrt\varepsilon)$ neighborhood of the stationary point.

We now investigate changes in the monotonicity in the advocacy level. We shall consider the case above the threshold, since below the threshold the advocacy level is non-increasing for every $(x,y)$ in $A$.

%As $\dot{y} = \beta(x - y)y - \gamma y$, we have $\dot{y} > 0$ when $y < x - \displaystyle\frac{\gamma}{\beta}$, $\dot{y} = 0$ when $y = x - \displaystyle\frac{\gamma}{\beta}$ and $\dot{y} < 0$ when $y > x - \displaystyle\frac{\gamma}{\beta}$.

\begin{definition}
    We define the following subsets of $A$:
    $$
    \begin{aligned}
    A_+ & = \Big\{(x,y)\in A: 0 < y < x - \displaystyle\frac{\gamma}{\beta}\Big\}\,, \\
    A_- & = 
    \begin{cases}
    \{(x,y) \in A: y \ge g_{\varepsilon_\beta}(x)\} \quad & \beta < 1\,,\\
    \emptyset & \beta \ge 1\,,
    \end{cases}\\
\bar A & = A \setminus (A_s \cup A_+ \cup A_-)\,.
\end{aligned}
$$
\end{definition}

The three sets are illustrated in Figure \ref{fig:phase_partition}.We recall that the curve $g_{\varepsilon_\beta}(x)$ is affine, as proved in Lemma \ref{lemma:g}(ii). Moreover, the segment $y = x - \gamma/\beta$ identifies the stationary points of the curve $g_\varepsilon(x)$ as $\varepsilon$ varies, as Lemma \ref{lemma:g}(iii) proves. The next result characterizes the transient behavior of $y(t)$ for the initial conditions in each set.
%those three sets, whose union coincides with the basin of attraction of the equilibrium $(1,y^*)$.

\begin{figure}[thpb]
    \centering
    \begin{subfigure}{0.48\columnwidth}
        \centering
        \includegraphics[width=\linewidth]{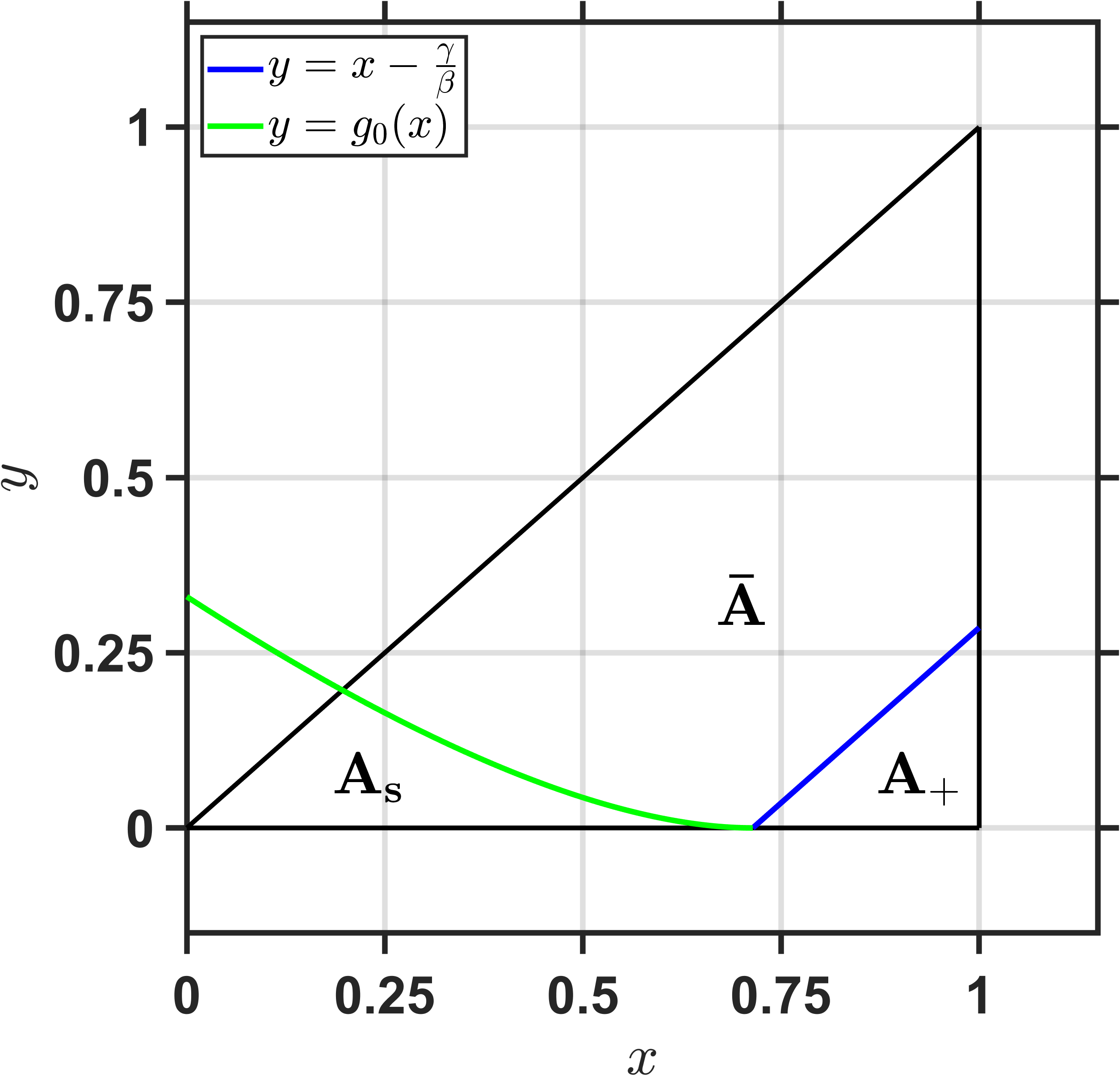}
        \caption{$\beta \ge 1$}
        \label{subfig:beta_ge_1}
    \end{subfigure}
    \hfill
    \begin{subfigure}{0.48\columnwidth}
        \centering
        \includegraphics[width=\linewidth]{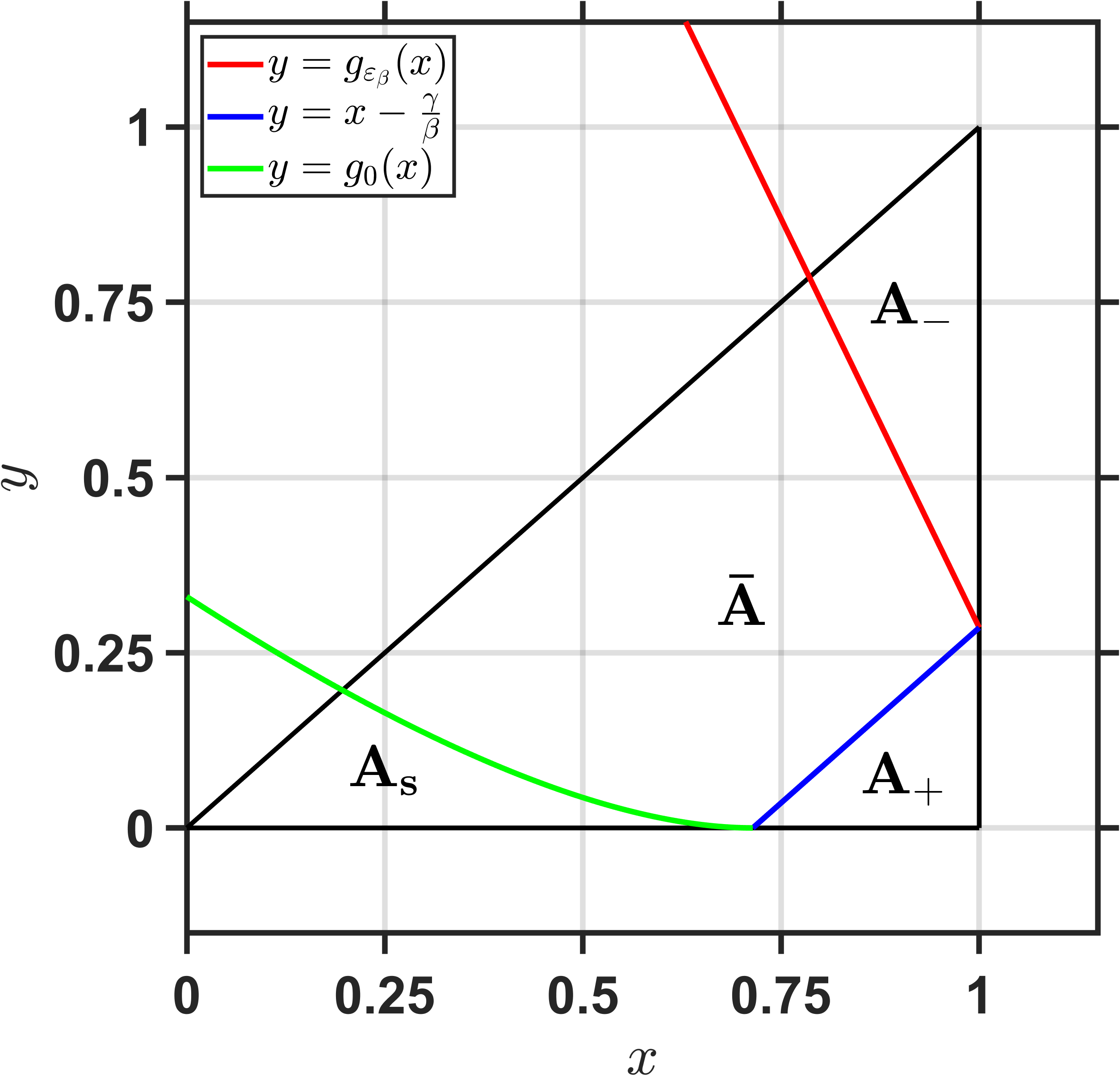}
        \caption{$\beta < 1$}
        \label{subfig:beta_lt_1}
    \end{subfigure}
    %\caption{Phase partitioning of the state space. Left ($\beta \ge 1$): The regions $I_1$ and $J$ are separated by the switching line $y = x - \gamma/\beta$. Right ($\beta < 1$): The region $K$ appears where trajectories converge to the endemic equilibrium monotonically. The green curve $y=g_0(x)$ denotes the boundary of the attraction basin of equilibrium $(1,y^*)$. \textcolor{red}{inconsistent notation, symbols are now different. I would also change the legent for the red line, which should be $y = g_{\varepsilon(x)}$}}
    \caption{Phase partitioning of the state space. Left ($\beta \ge 1$): The regions $A_s$ and $A_+$ are separated from $\bar A$ by the curve $y = g_0(x)$ (green) and the segment $y = x - \gamma/\beta$ (blue), respectively. Right ($\beta < 1$): An additional region $A_-$ appears, bounded by the curve $y = g_{\varepsilon_\beta}(x)$ (red line).} %Here, $y = g_0(x)$ represents the boundary of the attraction basin for the equilibrium $(1, y^*)$.}
    \label{fig:phase_partition}
\end{figure}
    
\begin{proposition}\label{prop:y_der}
Let $\beta > \gamma$, and consider model \eqref{sys_dynamics} with initial condition $(x_0,y_0)$ in $A$. Then:
\begin{enumerate}
\item[(i)] if $(x_0,y_0)$ belongs to $A_+$, $y(t)$ is monotonically increasing;
\item[(ii)] if $(x_0,y_0)$ belongs to $A_-\cup A_s$, $y(t)$ is monotonically decreasing;
\item[(iii)] if $(x_0,y_0)$ belongs to $\bar A$, $y(t)$ is monotonically decreasing until reaching a minimum point, and then increasing
\end{enumerate}
\end{proposition}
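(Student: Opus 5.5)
The whole argument rests on the sign of $\dot y = \beta y\,(x - y - \gamma/\beta)$. For $y>0$ we have $\dot y>0$ exactly when the state lies strictly below the segment $L=\{y = x-\gamma/\beta\}$ and $\dot y<0$ exactly when it lies strictly above it. Lemma \ref{lemma:g}(iii) says that $L$ is the locus of the stationary points $(\hat x_\varepsilon,\hat y_\varepsilon)$ of the level curves $y=g_\varepsilon(x)$. Proposition \ref{prop:H} and Lemma \ref{lemma:g}(i) say that every trajectory with $x<1$ stays on one such curve (for $y>0$ above $g_0$, $\varepsilon>0$), and Proposition \ref{proposition:basic}(ii) says $x(t)$ is non-decreasing. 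So along a trajectory, $y(t)=g_\varepsilon(x(t))$, and the monotonicity of $y$ in $t$ is read off the monotonicity of $g_\varepsilon$ in $x$. I will therefore locate each region relative to the stationary point of the relevant $g_\varepsilon$.

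\textbf{Item (i).} If $(x_0,y_0)\in A_+$, then $y_0<g_{\varepsilon}(x_0)$ for the curve's own $\varepsilon$ at $x_0$ because the point lies below $L$. Equivalently, $x_0>\hat x_\varepsilon$ on a convex $g_\varepsilon$, i.e., we are on the increasing branch. I will check directly that $A_+$ is positively invariant. On $L$ itself, $\dot y=0$ while $\dot x=(1-x)y>0$ for $x<1$, so the vector field points into $\{y<x-\gamma/\beta\}$; the boundary $y=0$ is invariant. Hence $\dot y>0$ for all $t$, and the case $x_0=1$ is trivial.

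\textbf{Item (ii).} For $A_s$ I reuse the proof of the asymptotic Theorem, item (i): $x(t)\le\gamma/\beta$ for all $t$, so $\dot y\le -\beta y^2\le 0$. If $y_0=0$ the trajectory is constant. For $A_-$ with $\beta<1$, the boundary curve $g_{\varepsilon_\beta}$ is affine and is itself a trajectory, and for $\varepsilon\ge\varepsilon_\beta$ the curves $g_\varepsilon$ are concave or affine by Lemma \ref{lemma:g}(ii). I need to show that these curves have negative slope on $A$, so that $y$ decreases as $x$ increases. I plan to compute $g'_\varepsilon(x)=-\beta/(1-\beta)+\beta C_{\varepsilon,\beta}(1-x)^{\beta-1}$. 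Since $C_{\varepsilon,\beta}\le 0$ when $\varepsilon\ge\varepsilon_\beta$ and $\beta<1$, this gives $g'_\varepsilon<0$. Alternatively, I can show that $A_-$ lies above $L$: the affine curve satisfies $g_{\varepsilon_\beta}(1)=1-\gamma/\beta$, which is where $L$ ends, and its slope $-\beta/(1-\beta)<1$. Then invariance of $A_-$ (a union of level sets) gives $\dot y<0$ throughout.

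\textbf{Item (iii).} Here $\bar A$ consists of points with $y>0$ that lie above $g_0$ or have $x>\gamma/\beta$, are on or above $L$, and are below $g_{\varepsilon_\beta}$ when $\beta<1$. These are exactly the points on a convex curve $g_\varepsilon$ (with $\varepsilon\in(0,\varepsilon_\beta)$ if $\beta<1$, or any $\varepsilon>0$ if $\beta\ge 1$) at $x_0\le\hat x_\varepsilon$, that is, on the decreasing branch. By strict convexity, $g_\varepsilon$ decreases on $[x_0,\hat x_\varepsilon]$ and increases afterwards. It then suffices that $x(t)$ increases strictly past $\hat x_\varepsilon$. This holds because $\dot x=(1-x)y\ge(1-x)\hat y_\varepsilon>0$ while $x<\hat x_\varepsilon<1$, using $\hat y_\varepsilon>0$ from \eqref{eq:mono} and $\hat y_0=0$. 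Since $y(t)=g_\varepsilon(x(t))$, $y$ decreases until $x=\hat x_\varepsilon$ and increases afterwards; the limit $(1,y^*)$ from the asymptotic Theorem is consistent with this. A point exactly on $L$ gives a degenerate minimum at $t=0$.

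\textbf{Main obstacle.} I expect the hard step to be the bookkeeping that matches the set definitions to the branches of the level curves. Concretely, one must show that $\bar A\cap\{y>0\}$ consists precisely of points on convex $g_\varepsilon$ with $\varepsilon>0$ at $x\le\hat x_\varepsilon$. I will argue this through the monotonicity of $H_\beta$ in $y$ (Remark \ref{remark:H}), which orders the curves $g_\varepsilon$ increasingly in $\varepsilon$, together with the fact that the part of a convex curve above $L$ is exactly its part left of its minimizer.
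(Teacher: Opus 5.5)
Your proposal is correct and follows essentially the paper's argument. It uses the sign of $\dot y=\beta y\,(x-y-\gamma/\beta)$ relative to the segment $y=x-\gamma/\beta$, positive invariance of $A_+$ from $\dot y=0$ and $\dot x>0$ on that segment, and reads items (ii)--(iii) off the level curves $y=g_\varepsilon(x)$ of $H_\beta$ together with the monotonicity of $x(t)$; your bound $\dot x\ge(1-\hat x_\varepsilon)\hat y_\varepsilon>0$ usefully makes explicit that the minimiser $\hat x_\varepsilon$ is reached in finite time, which the paper leaves implicit. As in the paper, item (iii) tacitly needs $x_0<1$: for $\beta\ge1$ the points $(1,y_0)$ with $y_0\ge y^*$ lie in $\bar A$ but on no $g_\varepsilon$, and there $y(t)$ only decreases to $y^*$, while the opening claim $y_0<g_\varepsilon(x_0)$ in your item (i) is garbled but harmless since your direct invariance argument does the work.
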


\begin{proof}
(i) Observe that $\dot y, \dot x > 0$ in the interior of $A_+$, and $\dot y = 0, \dot x > 0$ on the segment that separates $A_+$ from $\bar A$. Hence, $A_+$ is positively invariant, and therefore $y(t)$ is monotonically increasing for all initial conditions in $A_+$.

(ii) First note that that $\dot y < 0$ for all points in $A_-\cup A_s$. Moreover, both $A_-$ and $A_s$ are positively invariant, since the trajectory cannot cross the segment $y = g_{\varepsilon_\beta}(x)$, since such curve is a level set of the invariant of motion $H_\beta$. This readily proves the claim. 

%(iii) Note that $\dot y < 0$ for all points $(x,y)$ in $\bar A$. Moreover, every point $(x,y)$ in $\bar A$ lies on a curve $y = g_{\varepsilon}(x)$, with $\epsilon$ in $(0,\varepsilon_\beta)$, since $\partial \hat y_\varepsilon / \partial \varepsilon > 0$ by Lemma \ref{lemma:g}(iii). Lemma \ref{lemma:g}(ii) then implies that the curve $g_\varepsilon(x)$ is convex and admits a stationary point $(\hat x_\varepsilon,\hat y_\varepsilon)$ on the segment that separates $\bar A$ from $A_+$.  Since $A_+$ is positively invariant and $\dot y > 0$ for all points in $A_+$, the statement is proved. 
(iii) Note that $\dot y < 0$ for all points $(x,y)$ in $\bar A$. Moreover, since $\bar A$ lies between the curves $g_{0}$ and $g_{\varepsilon_\beta}$, every point $(x,y)$ in $\bar A$ lies on a unique convex level set $y = g_{\varepsilon}(x)$, with $\epsilon$ in $(0,\varepsilon_\beta)$, since $\partial \hat y_\varepsilon / \partial \varepsilon > 0$ by Lemma \ref{lemma:g}(iii). Lemma \ref{lemma:g}(ii) then implies that the curve $g_\varepsilon(x)$ is convex and admits a stationary point $(\hat x_\varepsilon,\hat y_\varepsilon)$ on the segment that separates $\bar A$ from $A_+$. Since $x(t)$ is monotonically increasing and each convex level curve admits a unique stationary point, the trajectory must cross the stationary point exactly once before entering $A_
+$. Since $A_+$ is positively invariant and $\dot y > 0$ for all points in $A_+$, the statement is proved. 
\end{proof}

Proposition \ref{prop:y_der} characterizes the monotonicity of the advocacy level $y$. Note that, if $(x_0,y_0)$ belongs to either $A_+$, $A_-$, or $\bar A$, the advocacy level will eventually converge to $y^*$. In contrast, if the initial condition lies in $A_s$, the advocacy level converges monotonically to $0$.

\section{Conclusion}\label{sec:conclusion}

In this paper, we have proposed and analyzed a minimal dynamical model to explore the diffusion of social norms through the interplay of internal belief conversion and public  mobilization. By partitioning the population into non-supporters, silent supporters, and vocal advocates, our model captures the critical tension between social recruitment and activist fatigue. 

Our analytical results provide a rigorous characterization of the system long-term behavior. We identified the conditions under which a new norm successfully saturates the population or fades into stagnation, demonstrating that these outcomes are governed by the intrinsic ratios of mobilization and fatigue to the conversion rate. Crucially, we discovered and quantified an "incubation phenomenon," where a deceptive period of quiescence precedes an abrupt, endogenous explosion of public support. This finding offers a mechanistic explanation for how social movements can "store energy" invisibly before reaching a tipping point, even in the absence of external shocks. 

The theoretical framework presented here contributes to the understanding of social tipping points by revealing a purely internal route to sudden social change.  Future research could extend this minimal model by incorporating network heterogeneities or exploring the impact of competitive dynamics between multiple social norms. By bridging the gap between individual-level fatigue and macroscopic social transitions, this work provides a foundation for more nuanced modeling of collective behavior in complex social systems.

To further validate the proposed "incubation phenomenon" and the underlying dynamical mechanisms of this model, future research should pursue empirical investigations across diverse disciplines, particularly at the intersection of social and natural sciences.

In sociology, large-scale social media analytics could distinguish between silent supporters and vocal advocates to verify the model's predicted thresholds in phenomena such as "spiral of silence" reversals or the "crossing the chasm" phase of innovation adoption. Parallelly, biological systems offer high-resolution settings for testing these state-transition mechanisms, ranging from the "quorum sensing" threshold in microbial colonies to the "angiogenic switch" in dormant oncology cells and the pre-ictal phase in neurobiology. By quantitatively tracking the interplay between recruitment (mobilization) and metabolic costs (fatigue) in these diverse environments, researchers can calibrate the model's parameters ($\alpha, \beta, \gamma$), transforming this minimal framework into a robust predictive tool for characterizing nonlinear phase transitions in both human and natural complex systems.

\bibliographystyle{plain}  
%plain, IEEEtran, alpha
\bibliography{references} 

@article{nowak1990private,
  title     = {From private attitude to public opinion: A dynamic theory of social impact.},
  author    = {Nowak, Andrzej and Szamrej, Jacek and Latan{\'e}, Bibb},
  journal   = {Psychological review},
  volume    = {97},
  number    = {3},
  pages     = {362},
  year      = {1990},
  publisher = {American Psychological Association}
}

@book{hale,
	address = {Huntington, N.Y.},
	author = {Hale, J. K.},
	edition = {2nd},
	publisher = {Robert E. Krieger Publishing Co., Inc.},
	title = {Ordinary Differential Equations},
	year = {1980}}

@book{Guckenheimer1983,
	address = {New York, NY},
	author = {J. Guckenheimer and P. Holmes},
	isbn = {978-0-387-90819-9},
	publisher = {Springer},
	series = {Applied Mathematical Sciences},
	title = {Nonlinear Oscillations, Dynamical Systems, and Bifurcations of Vector Fields},
	volume = {42},
	year = {1983}}

@article{brown2018economic,
  title     = {Economic correlates of footbinding: Implications for the importance of Chinese daughters’ labor},
  author    = {Brown, Melissa J and Satterthwaite-Phillips, Damian},
  journal   = {PloS one},
  volume    = {13},
  number    = {9},
  pages     = {e0201337},
  year      = {2018},
  publisher = {Public Library of Science San Francisco, CA USA}
}

@article{916272ae-6b7a-3e62-b7e2-5747187dae7b,
  issn      = {00251909, 15265501},
  url       = {http://www.jstor.org/stable/2628128},
  author    = {Frank M. Bass},
  journal   = {Management Science},
  number    = {5},
  pages     = {215--227},
  publisher = {INFORMS},
  title     = {A New Product Growth for Model Consumer Durables},
  urldate   = {2025-08-20},
  volume    = {15},
  year      = {1969}
}

@article{alos2016inertia,
  title     = {Inertia and decision making},
  author    = {Al{\'o}s-Ferrer, Carlos and H{\"u}gelsch{\"a}fer, Sabine and Li, Jiahui},
  journal   = {Frontiers in psychology},
  volume    = {7},
  pages     = {169},
  year      = {2016},
  publisher = {Frontiers Media SA}
}

@article{MORE2019102,
  title    = {A SI model for social media influencer maximization},
  journal  = {Applied Computing and Informatics},
  volume   = {15},
  number   = {2},
  pages    = {102-108},
  year     = {2019},
  issn     = {2210-8327},
  doi      = {https://doi.org/10.1016/j.aci.2017.11.001},
  url      = {https://www.sciencedirect.com/science/article/pii/S221083271730162X},
  author   = {Jyoti Sunil More and Chelpa Lingam}
}

@article{ye2021collective,
  title     = {Collective patterns of social diffusion are shaped by individual inertia and trend-seeking},
  author    = {Ye, Mengbin and Zino, Lorenzo and Mlakar, {\v{Z}}an and Bolderdijk, Jan Willem and Risselada, Hans and Fennis, Bob M and Cao, Ming},
  journal   = {Nature Communications},
  volume    = {12},
  number    = {1},
  pages     = {5698},
  year      = {2021},
  publisher = {Nature Publishing Group UK London}
}

@article{jain_public_nodate,
  title    = {Public Opinion and the Dynamics of Reform},
  author   = {Jain, Sanjay and Mukand, Sharun},
  langid   = {english}
}

@article{ravindran2014antecedents,
  title     = {Antecedents and effects of social network fatigue},
  author    = {Ravindran, Thara and Yeow Kuan, Alton Chua and Hoe Lian, Dion Goh},
  journal   = {Journal of the Association for Information Science and Technology},
  volume    = {65},
  number    = {11},
  pages     = {2306--2320},
  year      = {2014},
  publisher = {Wiley Online Library}
}

@article{kuran1991now,
  title     = {Now out of never: The element of surprise in the East European revolution of 1989},
  author    = {Kuran, Timur},
  journal   = {World politics},
  volume    = {44},
  number    = {1},
  pages     = {7--48},
  year      = {1991},
  publisher = {Cambridge University Press}
}

@article{PhysRevLett.86.2909,
  title     = {Small World Effect in an Epidemiological Model},
  author    = {Kuperman, Marcelo and Abramson, Guillermo},
  journal   = {Phys. Rev. Lett.},
  volume    = {86},
  issue     = {13},
  pages     = {2909--2912},
  numpages  = {0},
  year      = {2001},
  month     = {Mar},
  publisher = {American Physical Society},
  doi       = {10.1103/PhysRevLett.86.2909},
  url       = {https://link.aps.org/doi/10.1103/PhysRevLett.86.2909}
}

@article{XIAO2001733,
  title    = {Analysis of a Three Species Eco-Epidemiological Model},
  journal  = {Journal of Mathematical Analysis and Applications},
  volume   = {258},
  number   = {2},
  pages    = {733-754},
  year     = {2001},
  issn     = {0022-247X},
  doi      = {https://doi.org/10.1006/jmaa.2001.7514},
  url      = {https://www.sciencedirect.com/science/article/pii/S0022247X01975146},
  author   = {Yanni Xiao and Lansun Chen}
}

@article{PIQUEIRA2009355,
  title    = {A modified epidemiological model for computer viruses},
  journal  = {Applied Mathematics and Computation},
  volume   = {213},
  number   = {2},
  pages    = {355-360},
  year     = {2009},
  issn     = {0096-3003},
  doi      = {https://doi.org/10.1016/j.amc.2009.03.023},
  url      = {https://www.sciencedirect.com/science/article/pii/S0096300309002379},
  author   = {José Roberto C. Piqueira and Vanessa O. Araujo}
}

@article{annurev:/content/journals/10.1146/annurev.phyto.45.062806.094357,
  author    = {Gilligan, Christopher A. and van den Bosch, Frank},
  title     = {Epidemiological Models for Invasion and Persistence of Pathogens},
  journal   = {Annual Review of Phytopathology},
  year      = {2008},
  volume    = {46},
  number    = {Volume 46, 2008},
  pages     = {385-418},
  doi       = {https://doi.org/10.1146/annurev.phyto.45.062806.094357},
  url       = {https://www.annualreviews.org/content/journals/10.1146/annurev.phyto.45.062806.094357},
  publisher = {Annual Reviews},
  issn      = {1545-2107},
  type      = {Journal Article}
}

@article{https://doi.org/10.1155/2013/721406,
  author   = {Rodrigues, Helena Sofia and Monteiro, M. Teresa T. and Torres, Delfim F. M.},
  title    = {Sensitivity Analysis in a Dengue Epidemiological Model},
  journal  = {Conference Papers in Science},
  volume   = {2013},
  number   = {1},
  pages    = {721406},
  doi      = {https://doi.org/10.1155/2013/721406},
  url      = {https://onlinelibrary.wiley.com/doi/abs/10.1155/2013/721406},
  eprint   = {https://onlinelibrary.wiley.com/doi/pdf/10.1155/2013/721406},
  year     = {2013}
}

@article{BETTENCOURT2006513,
  title    = {The power of a good idea: Quantitative modeling of the spread of ideas from epidemiological models},
  journal  = {Physica A: Statistical Mechanics and its Applications},
  volume   = {364},
  pages    = {513-536},
  year     = {2006},
  issn     = {0378-4371},
  doi      = {https://doi.org/10.1016/j.physa.2005.08.083},
  url      = {https://www.sciencedirect.com/science/article/pii/S0378437105010113},
  author   = {Luís M.A. Bettencourt and Ariel Cintrón-Arias and David I. Kaiser and Carlos Castillo-Chávez}
}

@article{doi:10.1177/0885728810387922,
  author  = {Robert W. Flexer and Alfred W. DavisoIII and Robert M. Baer and Rachel McMahan Queen and Richard S. Meindl},
  title   = {An Epidemiological Model of Transition and Postschool Outcomes},
  journal = {Career Development for Exceptional Individuals},
  volume  = {34},
  number  = {2},
  pages   = {83-94},
  year    = {2011},
  doi     = {10.1177/0885728810387922},
  url     = { https://doi.org/10.1177/0885728810387922},
  eprint  = {https://doi.org/10.1177/0885728810387922}
}

@article{doi:10.1068/b2833,
  author   = {Ling Bian},
  title    = {A Conceptual Framework for an Individual-Based Spatially Explicit Epidemiological Model},
  journal  = {Environment and Planning B: Planning and Design},
  volume   = {31},
  number   = {3},
  pages    = {381-395},
  year     = {2004},
  doi      = {10.1068/b2833},
  url      = { https://doi.org/10.1068/b2833},
  eprint   = { https://doi.org/10.1068/b2833}
}

@article{doi:10.1073/pnas.2311584120,
  author   = {Chadi M. Saad-Roy  and Arne Traulsen },
  title    = {Dynamics in a behavioral–epidemiological model for individual adherence to a nonpharmaceutical intervention},
  journal  = {Proceedings of the National Academy of Sciences},
  volume   = {120},
  number   = {44},
  pages    = {e2311584120},
  year     = {2023},
  doi      = {10.1073/pnas.2311584120},
  url      = {https://www.pnas.org/doi/abs/10.1073/pnas.2311584120},
  eprint   = {https://www.pnas.org/doi/pdf/10.1073/pnas.2311584120}
}

@article{doi:10.1086/226707,
  author   = {Granovetter, Mark},
  title    = {Threshold Models of Collective Behavior},
  journal  = {American Journal of Sociology},
  volume   = {83},
  number   = {6},
  pages    = {1420-1443},
  year     = {1978},
  doi      = {10.1086/226707},
  url      = { https://doi.org/10.1086/226707},
  eprint   = { https://doi.org/10.1086/226707}
}

@article{doi:10.1126/science.aas8827,
  author   = {Damon Centola  and Joshua Becker  and Devon Brackbill  and Andrea Baronchelli },
  title    = {Experimental evidence for tipping points in social convention},
  journal  = {Science},
  volume   = {360},
  number   = {6393},
  pages    = {1116-1119},
  year     = {2018},
  doi      = {10.1126/science.aas8827},
  url      = {https://www.science.org/doi/abs/10.1126/science.aas8827},
  eprint   = {https://www.science.org/doi/pdf/10.1126/science.aas8827}
}

@article{doi:10.1073/pnas.1418838112,
  author   = {Damon Centola  and Andrea Baronchelli },
  title    = {The spontaneous emergence of conventions: An experimental study of cultural evolution},
  journal  = {Proceedings of the National Academy of Sciences},
  volume   = {112},
  number   = {7},
  pages    = {1989-1994},
  year     = {2015},
  doi      = {10.1073/pnas.1418838112},
  url      = {https://www.pnas.org/doi/abs/10.1073/pnas.1418838112},
  eprint   = {https://www.pnas.org/doi/pdf/10.1073/pnas.1418838112}
}

@article{10.1098/rsta.2021.0169,
  author   = {Roy, Subhadeep and Biswas, Soumyajyoti},
  title    = {Opinion dynamics: public and private},
  journal  = {Philosophical Transactions of the Royal Society A: Mathematical, Physical and Engineering Sciences},
  volume   = {380},
  number   = {2224},
  pages    = {20210169},
  year     = {2022},
  month    = {04},
  issn     = {1364-503X},
  doi      = {10.1098/rsta.2021.0169},
  url      = {https://doi.org/10.1098/rsta.2021.0169},
  eprint   = {https://royalsocietypublishing.org/rsta/article-pdf/doi/10.1098/rsta.2021.0169/1323858/rsta.2021.0169.pdf}
}

@article{8353159,
  author   = {Pagliara, Renato and Dey, Biswadip and Leonard, Naomi Ehrich},
  journal  = {IEEE Control Systems Letters},
  title    = {Bistability and Resurgent Epidemics in Reinfection Models},
  year     = {2018},
  volume   = {2},
  number   = {2},
  pages    = {290-295},
  doi      = {10.1109/LCSYS.2018.2832063}
}

\appendix

\section{Proof of Lemma \ref{lemma:g}}
Let $\beta = 1$ and consider a point $(x,y)$ in $A$ with $x<1$ and $y = g_\varepsilon(x)$. Then,
$$\begin{aligned}
        H_{\beta}(x,y) &= H_{\beta}(x,g_{\varepsilon}(x)) \\[5pt]    
        & = \displaystyle\frac{g_{\varepsilon}(x) + \gamma - 1}{1 - x} - \ln(1 - x)\\
        &= \displaystyle\frac{1 - \gamma + (1 - x)\big(\ln (1-x) - \ln (1-\gamma) -1 + \varepsilon\big) + \gamma - 1}{1 - x} - \ln(1 - x)\\
        &= -1-\ln(1 - \gamma) + \varepsilon \\
        &= H_\beta(\gamma/\beta,0) + \varepsilon\,.
    \end{aligned}$$
Let now $\beta \neq 1$ and consider a point $(x,y)$ in $A$ with $x<1$ and $y = g_\varepsilon(x)$. Then,
    $$\begin{aligned}
        H_{\beta}(x,y) &= H_{\beta}(x,g_{\varepsilon}(x)) \\
        &= \Big(g_{\varepsilon}(x) + \displaystyle\frac{\gamma}{\beta} - 1\Big)(1 - x)^{-\beta} - \frac{\beta}{1 - \beta}(1 - x)^{1 - \beta}\\
        &= \Bigg(1 - \displaystyle\frac{\gamma}{\beta} + \frac{\beta}{1 - \beta}(1 - x) - \bigg(\frac{(1-\frac{\gamma}{\beta})^{1-\beta}}{1-\beta}-\varepsilon\bigg)(1-x)^{\beta}+\frac{\gamma}{\beta}-1\Bigg)(1 - x)^{-\beta} \ + \\
        &-\frac{\beta}{1 - \beta}(1 - x)^{1 - \beta}\\[5pt]
        &= - \displaystyle\frac{(1-\frac{\gamma}{\beta})^{1-\beta}}{1-\beta} + \varepsilon \\[5pt]
        &= H_\beta(\gamma/\beta,0) + \varepsilon\,.
    \end{aligned}$$
This proves the 'if' implication of item (i). The 'only if' implication follows from the fact that $\partial H_\beta(x,y) / \partial y > 0$ (cf. Remark \ref{remark:H}). 

To prove the other items, we compute
explicitly the first and second derivative of $g_\varepsilon(x)$ by
        \begin{equation}\label{eq:g'}   
            g'_{\varepsilon}(x) = \begin{cases}
                -\varepsilon -\ln{\displaystyle\frac{1-x}{1-\gamma}}, & \beta = 1\\[10pt]
                -\displaystyle\frac{\beta}{1 - \beta}+C_{\varepsilon,\beta}\beta(1-x)^{\beta-1} & \beta \neq 1\,,
            \end{cases}
        \end{equation}
        \begin{equation}\label{eq:g''}
            g''_{\varepsilon}(x) = \begin{cases}
                \displaystyle\frac{1}{1-x} & \beta = 1\\[10pt]
                C_{\varepsilon,\beta} \beta(1-\beta)(1 - x)^{\beta - 2} & \beta \neq 1\,.
            \end{cases}
        \end{equation}
        From this expression, it follows that, if $\beta = 1$, $g_\varepsilon$ is strictly convex for every $\varepsilon \ge 0$. if $\beta > 1$, we have from \eqref{eq:C} that $C_{\varepsilon,\beta}<0$, hence $g_\varepsilon$ is convex for every $\varepsilon \ge 0$. If $\beta<1$, we have that $C_{\varepsilon,\beta}>0$ if $\varepsilon < \varepsilon_\beta$, $C_{\varepsilon,\beta} = 0$ if $\varepsilon = \varepsilon_\beta$, and $C_{\varepsilon,\beta}<0$ if $\varepsilon > \varepsilon_\beta$. This yields item (ii).
        
        The expression of $\hat x_\varepsilon$ follows from imposing $g_\varepsilon'(\hat x_\varepsilon) = 0$. Moreover, when $\beta \neq 1$, it follows from $g_\varepsilon'(\hat x_\varepsilon) = 0$
        that
        $$
        (1-\hat x_\varepsilon)^{\beta-1} = \frac{1}{(1-\beta)C_{\varepsilon,\beta}}\,.
        $$
        Using this expression and the definition of $g_\varepsilon$ in \eqref{eq:g}, we have that
        $$
        \begin{aligned}
        \hat y_\varepsilon & = g_\varepsilon(\hat x_\varepsilon) = 1 - \frac{\gamma}{\beta} + \frac{\beta}{1-\beta}(1-\hat x_\varepsilon) - C_{\varepsilon,\beta}(1-\hat x_\varepsilon)^{\beta} \\
        & = 1 - \frac{\gamma}{\beta} + \frac{\beta}{1-\beta}(1-\hat x_\varepsilon) - \frac{1-\hat x_{\varepsilon}}{1-\beta} \\
        & = \hat x_\varepsilon - \frac{\gamma}{\beta}\,.
        \end{aligned}
        $$
        When $\beta = 1$,
        $$
        \hat y_\varepsilon = g_\varepsilon(\hat x_\varepsilon) = 1 - \gamma + (1-\gamma)e^{-\varepsilon}(-\varepsilon + 1 + \varepsilon) = \hat x_\varepsilon - \gamma\,,
        $$
        thus proving the expression of $\hat y_\varepsilon$. To prove its monotonicity in $\varepsilon$, note that
        \begin{equation}\label{der1:xy}
        \hat y_\varepsilon' = \hat x_\varepsilon' = \begin{cases}
        (1-\gamma) e^{-\varepsilon} \quad & \beta = 1\\
        ((1-\beta)C_{\varepsilon,\beta})^{\beta/(1-\beta)} & \beta \neq 1\,,
        \end{cases}
        \end{equation}
        and both expressions are positive when $\beta > \gamma$ and when $g_\varepsilon$ is strictly convex: in fact, when $\beta = 1$, the fact that $\beta > \gamma$ implies $1-\gamma>0$; when $\beta \neq 1$, $g_\varepsilon$ is strictly convex if and only if $(1-\beta)C_{\varepsilon,\beta}>0$, yielding the monotonicity of $\hat x_\epsilon$. It still remains to prove that $(\hat x_\varepsilon, \hat y_\varepsilon)$ belongs to $A$ for every $\varepsilon \ge 0$. To this end, note that 
        \begin{equation}\label{eq:limit_x}
        \hat x_0 = \gamma/\beta\,, \quad \lim_{\varepsilon \to +\infty} \hat x_\varepsilon = 1\,,
        \end{equation}
        so that $\hat x_\varepsilon \in [\gamma/\beta,1)$ for every $\varepsilon \ge 0$ because of its monotonicity. Finally, the rightmost equation in \eqref{eq:xy} implies that $\hat y_\varepsilon \le \hat x_{\varepsilon}$ for every $\varepsilon \ge 0$, and, together with \eqref{eq:limit_x}, that
        $$
        \hat y_0 = 0\,, \qquad \lim_{\varepsilon \to +\infty} \hat y_{\varepsilon} = 1-\frac{\gamma}{\beta}\,,
        $$
        thus showing that $(\hat x_\varepsilon,\hat y_\varepsilon)$ belongs to $A$ for every $\varepsilon \ge 0$ and concluding the proof of item (iii).

        To prove item (iv), we derive \eqref{der1:xy} and obtain 
        $$\hat y_\varepsilon'' = \hat x_\varepsilon'' = 
        \begin{cases}
        -(1-\gamma) e^{-\varepsilon} \quad & \beta = 1 \\
        -\beta ((1-\beta)C_{\varepsilon,\beta})^{(2\beta-1)/(1-\beta)} & \beta \neq 1\,,
        \end{cases}
        $$
showing that $\hat y_\varepsilon$, whenever the stationary point exists, is concave. Hence, evaluating \eqref{der1:xy} with $\varepsilon = 0$,
$$
\hat y_\varepsilon \le \hat y_0' \varepsilon = \Big(1-\frac \gamma \beta\Big)^{\beta} \varepsilon\,.
$$

We now prove item (v). First note that, for sufficiently small $\varepsilon \ge 0$, $g_{\varepsilon}$ is convex and admits a stationary point $(\hat x_{\varepsilon},\hat y_{\varepsilon})$ due to item (ii). According to the Taylor expansion with Lagrange Remainder we have that, given $\varepsilon \ge 0$, for every $x$ in $[\hat{x}_{\varepsilon} - \sqrt{\varepsilon}, \hat{x}_{\varepsilon} + \sqrt{\varepsilon}]$, there exists $\xi$ in $[\hat{x}_{\varepsilon} - \sqrt{\varepsilon}, \hat{x}_{\varepsilon} + \sqrt{\varepsilon}]$ such that
            \begin{align*}
                g_{\varepsilon}(x) &= g_{\varepsilon}(\hat{x}_{\varepsilon}) + g'_{\varepsilon}(\hat{x}_{\varepsilon})(x - \hat{x}_{\varepsilon}) + \frac{g''_{\varepsilon}(\xi)}{2}(x - \hat{x}_{\varepsilon})^2\\
                &\leq g_{\varepsilon}(\hat{x}_{\varepsilon}) + K_\varepsilon(x - \hat{x}_{\varepsilon})^2,
            \end{align*}
            where
            $$
            K_\varepsilon = \max_{x \in [\hat{x}_{\varepsilon} - \sqrt{\varepsilon}, \hat{x}_{\varepsilon} + \sqrt{\varepsilon}]} g_\varepsilon''(x) / 2
            $$
            and $g_\varepsilon'(\hat x_\varepsilon) = 0$.
            Now, note from \eqref{eq:g''} that $g_\varepsilon''(x)$ is bounded if $x$ is bounded away from $1$. Moreover, for sufficiently small $\varepsilon$, $x = 1$ is not included in the interval $[\hat x_{\varepsilon} - \sqrt{\varepsilon}, \hat x_{\varepsilon} + \sqrt{\varepsilon}]$. This proves that $K_{\varepsilon}$ is finite and concludes the proof.
            %If $\beta = 1$, it follows from \eqref{eq:g''} and from the expression of $\hat x_{\varepsilon}$ that
            %$$K_\varepsilon = \frac{1}{2(1 - \hat{x}_{\varepsilon} - \sqrt{\varepsilon})}, \qquad \lim_{\varepsilon \to 0^+} K_\varepsilon = \frac{1}{2(1-\gamma/\beta)} < +\infty\,,$$
            %and $K_\varepsilon>0$ for small $\varepsilon$.
            %If $\beta \neq 1$ and for small $\varepsilon \ge 0$, explicit computation yields that $g_\varepsilon''(x)$ is increasing if $\beta < 2$, constant if $\beta = 2$, and decreasing if $\beta > 2$.
            %From this observation and from \eqref{eq:g''} it follows that $$K_\varepsilon = 
            %\begin{cases}
            %    \displaystyle \frac{C_{\varepsilon,\beta}}{2}\beta(1-\beta)(1 - \hat{x}_{\varepsilon} - \sqrt{\varepsilon})^{\beta -2} & \text{if} \ \beta \leq 2\,, \\[10pt]
             %   \displaystyle \frac{C_{\varepsilon,\beta}}{2}\beta(1-\beta)(1 - \hat{x}_{\varepsilon} + \sqrt{\varepsilon})^{\beta -2} & \text{if} \ \beta > 2\,.
            %\end{cases}$$
%Hence, we observe that $K_\varepsilon>0$ for small enough $\varepsilon$, and
%$$
%\lim_{\varepsilon \to 0^+} K_\varepsilon = \frac{\beta}{2(1-\gamma/\beta)} < +\infty\,.
%$$
%This concludes the proof.
\end{document}